\newtheorem{theorem}{Theorem}
\newtheorem{corollary}{Corollary}
\newcommand{\asskip}{\abovedisplayshortskip}
\newcommand{\bsskip}{\belowdisplayshortskip}
\newcommand{\alskip}{\abovedisplayskip}
\newcommand{\blskip}{\belowdisplayskip}
\begin{document}

\title{DFT-Based Hybrid Beamforming Multiuser Systems: Rate Analysis and Beam Selection}

\author{Yu~Han$^\ast$, Shi~Jin$^\ast$, Jun~Zhang$^\dagger$, Jiayi~Zhang$^\ddagger$, and~Kai-Kit~Wong$^\S$\\
$^\ast$National Mobile Communications Research Laboratory, Southeast University, Nanjing, China\\
$^\dagger$Nanjing University of Posts and Telecommunications, Nanjing, China\\
$^\ddagger$Beijing Jiaotong University, Beijing, China\\
$^\S$University College London, London, United Kingdom}
\maketitle

\asskip=3pt
\bsskip=3pt
\alskip=4.3pt
\blskip=4.3pt

\begin{abstract}
This paper considers the discrete Fourier transform (DFT) based hybrid beamforming multiuser system and studies the use of analog beam selection schemes. We first analyze the uplink ergodic achievable rates of the zero-forcing (ZF) receiver and the maximum-ratio combining (MRC) receiver under Ricean fading conditions. We then examine the downlink ergodic achievable rates for the ZF and maximum-ratio transmitting (MRT) precoders. The long-term and short-term normalization methods are introduced, which utilize long-term and instantaneous channel state information (CSI) to implement the downlink power normalization, respectively. Also, approximations and asymptotic expressions of both the uplink and downlink rates are obtained, which facilitate the analog beam selection solutions to maximize the achievable rates. An exhaustive search provides the optimal results but to reduce the time-consumption, we resort to the derived rate limits and propose the second selection scheme based on the projected power of the line-of-sight (LoS) paths. We then combine the advantages of the two schemes and propose a two-step scheme that achieves near optimal performances with much less time-consumption than exhaustive search. Numerical results confirm the analytical results of the ergodic achievable rate and reveal the effectiveness of the proposed two-step method.
\end{abstract}

\begin{keywords}
Hybrid beamforming, DFT beams, beam selection, ergodic achievable rates, multiuser transmission.
\end{keywords}

\section{Introduction}
Massive multiple-input multiple-output (MIMO) technology is a key enabler for the enormous data rate required by the fifth-generation mobile communication \cite{Larsson2014,Andrews2014}. Using a large-scale antenna array, base stations (BSs) can now obtain highly selective beams to pinpoint users \cite{Nam2013,Li2016}. The effects of uncorrelated noise and fast fading are also known to vanish when the number of antennas grows without limit \cite{Marzetta2010}. Massive MIMO provides abundant spatial degrees of freedom for diversity and multiplexing \cite{Adhikary2013,Han2017}. In the early days, the studies of massive MIMO largely focused on the full-digital system where all the signal processing is done at the baseband and each antenna element requires one distinct radio frequency (RF) chain. The large number of expensive transceivers and the huge amount of power consumption nevertheless become the bottlenecks that limit the developments of massive MIMO systems.

To overcome these problems, low-cost solutions have been proposed, ranging from, for example, decreasing the number of RF chains \cite{Huang2010,Bogale2016} to lowering the resolutions of analog-to-digital converters \cite{Fan2015,Zhang2016,Zhang2017}. One such example is the hybrid beamforming architecture, which uses a small number of RF chains to control the large-scale antenna array. Different from the full-digital system, there are two beamforming components in the hybrid beamforming system. One is the high-dimensional analog beamforming implemented at the RF module and another is the low-dimensional digital beamforming implemented at the baseband module. Due to the non-linear characteristics of power amplifiers, it is not suggested to adjust the amplitude of the signals for beamforming use at the RF module. The commonly used analog beamforming enablers include phase shifters \cite{Sohrabi2016,Han2015,Alkhateeb2015}, switch networks \cite{Mendez2015}, lens antennas \cite{Brady2013,Zeng2016}, and Butler matrices or other discrete Fourier transform (DFT) modules embedded on field-programmable analog arrays \cite{Molisch2003,Suh2011}. All the above devices only shift the phase of the signal without changing its modulus. Due to the constant-modulus restriction at the analog component, it is important to design proper analog beamforming weights for best performance. The methods to design the analog beamforming weights can be classified into two categories. One is the non-analog-codebook based design, where the analog beamforming weights are first calculated from a closed-form expression and then regulated according to the hardware constraint \cite{Sohrabi2016,Park2017}. For example, \cite{Park2017} computed the analog beamforming weights based on the spatial channel matrix and then iteratively adjusted the weights to satisfy the constraint of constant-module. The other is the analog-codebook based design, where an analog codebook that contains more than one beam is predefined and the analog beamforming weights are selected from the codebook.

In the analog-codebook based hybrid beamforming design, many efforts were paid on analog beam selection. In particular, \cite{Liu2014} adopted the DFT codebooks at the analog beamforming module and formulated the hybrid beamforming design into an optimization problem. Also, \cite{Alkhateeb2014} and \cite{Noh2015} introduced the multi-stage or multi-resolution codebook which would allow hierarchical searching and could significantly reduce the time for beam selection. Unfortunately, the selection process cannot be performed by more than one user in parallel, which hinders the application of multi-stage codebooks. In addition, in \cite{El2014}, it was proposed to decompose the full-digital beamforming weights into two parts: the analog part and the digital part by an orthogonal matching pursuit algorithm. To be more specific, the analog beamforming weights were selected from a vector set. Each vector in the set was a steering vector of the antenna array pointing to a sampled spatial direction. The weight decomposition based algorithm performed well in single-user systems, but suffered from inter-user interference if applied in the multiuser case. Practical analog beam selection schemes for multiuser systems are therefore needed.

All of the above-mentioned designs are based on the phase shifter networks which suffer from problems, such as difficult-to-implement integration and high energy consumption, etc. In light of this, low-cost and easy-to-implement devices such as Butler matrices have since gained much importance \cite{Butler1961}. Results in \cite{Garcia2016} demonstrate that the Butler matrix based DFT analog beamforming network introduces less power losses and outperforms the phase shifter based fully connected analog beamforming network with more RF chains. Later, \cite{Tan2017} analyzed the achievable rate of the DFT-based multiuser hybrid system in Rayleigh fading channels when the zero-forcing (ZF) receiver was employed in the uplink. However, the DFT beams in \cite{Tan2017} were fixed and could not be switched according to the actual situations. For this reason, it has motivated the use of the analog beam selection schemes of the Butler matrix based hybrid beamforming for multiuser systems.

This paper focuses on the Butler matrix based hybrid beamforming architecture and investigates analog beam selection schemes for multiuser systems. We use Ricean fading channel models to account for typical current and future applications, for example, machine-type communications \cite{Zhang2017}. In order to find the beams that optimize the rate performance, we first derive the approximations of the ergodic achievable rate\footnote{For convenience, the terms ``achievable rate'' and ``ergodic achievable rate'' are used interchangeably for the rest of this paper.} under the assumption that the analog beams are fixed. Then based on the ergodic rate analysis, we obtain optimal and suboptimal beam selection solutions. Note that the selection results are effective during the channel's coherent time. The main contributions of this paper are twofold:
\begin{itemize}
\item {\em Approximations of the uplink and downlink achievable rates}---We first analyze the uplink achievable rates of the ZF and the maximum-ratio combining (MRC) receivers and derive their approximations, respectively. The effect of the analog beamformed line-of-sight (LoS) paths on the achievable rate is examined. We demonstrate that the orthogonality of the analog beamformed LoS paths from different users as well as the the complete projection of the LoS paths on the selected beams contribute to high rates. Then utilizing a similar approach for the downlink, we obtain approximations of the achievable rate for the ZF and the maximum-ratio transmitting (MRT) precoders when the long-term normalization and short-term normalization methods are adopted, respectively. We find that for ZF precoders short-term normalization gives higher rate than the long-term normalization case, while in the case of MRT precoders the result is not conclusive.
\item {\em Practical analog beam selection schemes}---The approximations and asymptotic expressions of the achievable rates help us develop more efficient DFT beam selection schemes. An approximation-based exhaustive search is first introduced to achieve the optimum performance at the price of the highest time-consumption. In particular, referring to the observation on the effect of the projected power of the LoS paths on the rate, we propose a projected power based per-user selection scheme to choose beams according to the maximum projected power on the LoS paths for each user. The scheme reduces the computation time greatly, but ignores the inter-user interference. To tackle this, we propose a two-step selection scheme where we perform the per-user selection in the first step assuming there are more RF chains than they have before the extra beams are removed using the asymptotic rate expressions in the second step. This scheme strikes the balance between performance and time-consumption.
\end{itemize}
It is worth emphasizing that a distinguishing feature from the prior work in \cite{Tan2017} is that our work aims to design analog beam selection schemes for Ricean fading channels. The proposed beam selection schemes can be applied to uplink ZF/MRC receivers, as well as downlink ZF/MRT precoders with both long-term and short-term normalization methods.

The rest of the paper is organized as follows. Section \ref{Sec:System Model} introduces the hybrid beamforming multiuser system, including the Butler matrix based hybrid architecture, Ricean fading channel, and uplink and downlink signal models. Sections \ref{Sec:Uplink Rate} and \ref{Sec:Downlink Rate} analyze the uplink achievable rates of the ZF/MRC receivers and the downlink rates of the ZF/MRT precoders, respectively. Section \ref{Sec:Beam Selection} presents and compares the three analog beam selection schemes. The numerical results are shown in Section \ref{Sec:Numerical Results}. Finally, Section \ref{Sec:Conclusion} concludes the paper.

\emph{Notations}---In this paper, matrices and vectors are denoted by uppercase and lowercase boldface letters, respectively. We use ${\mathbf{I}}$ to represent the identity matrix. The superscripts $(\cdot)^\dag$, $(\cdot)^{H}$, $(\cdot)^{T}$, and $(\cdot)^{*}$ denote, respectively, the pseudo-inverse, conjugate-transpose, transpose, and conjugate operations. $\mathbb{E}\{\cdot\}$ represents the expectation with respect to all random variables within the brackets. We also use $\left| \cdot \right|$ and $\left\| \cdot \right\|$ to denote taking absolute value and modulus operations respectively, and $\left\lfloor \cdot \right\rfloor$ to represent rounding a decimal to its nearest lower integer.

\section{System Model}\label{Sec:System Model}
Consider a massive MIMO multiuser system where the BS is located at the cell center and communicates with $N_u$ single-antenna users on the same time-frequency resource block.

\subsection{Hybrid Beamforming}

\begin{figure}
  \centering
  \includegraphics[scale=0.45]{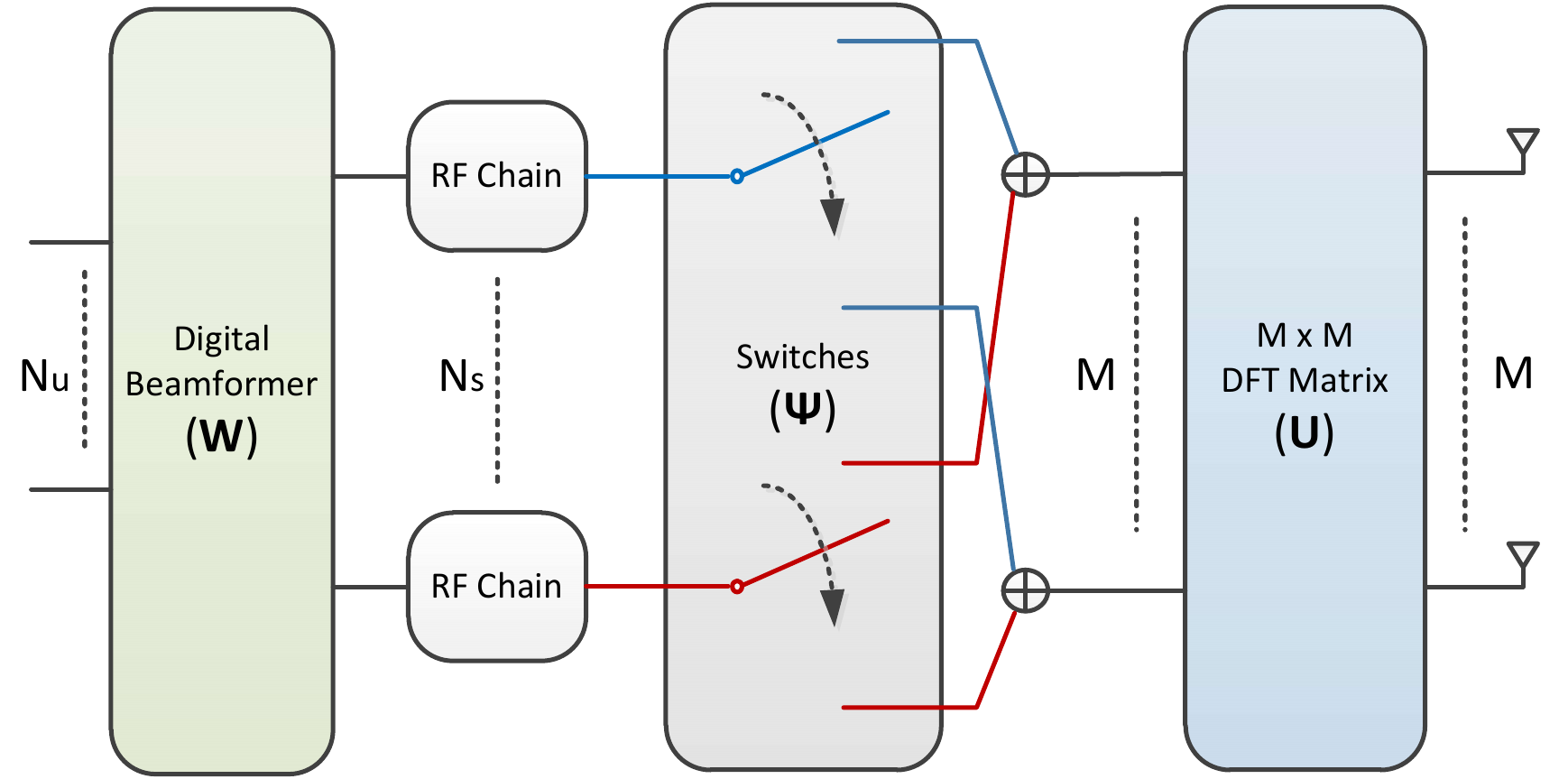}
  \caption{The BS adopts the Butler Matrix based hybrid architecture. Each RF chain connects with one of the $M$ ports by controlling its own switch.}\label{ButlerMatrix}
\end{figure}

In this paper, we focus on the low-cost hybrid analog-and-digital architecture. The Butler matrix based hybrid architecture is adopted at the BS as shown in Fig.\ref{ButlerMatrix}. The total number of BS antenna elements is $M$. The $M$ dimensional DFT matrix corresponds to $M$ DFT beam ports. We assume that there are $N_s$ ($N_u \le N_s \le M$) RF chains. Each RF chain controls its switch to connect with one of the $M$ beam ports.

In this model, we denote the $M$ dimensional DFT matrix as $\bf{U}$, which is written as
\begin{equation}\label{DFT codebook}
{\bf{U}} = \frac{1}{\sqrt{M}} \left[
\begin{matrix}
1 & 1 & \cdots & 1\\
1 & e^{j2\pi\frac{1}{M}} & \cdots & e^{j2\pi\frac{M-1}{M}} \\
\vdots & \vdots & \quad & \vdots\\
1 & e^{j2\pi\frac{M-1}{M}} & \cdots & e^{j2\pi\frac{(M-1)^2}{M}}
\end{matrix}
\right].
\end{equation}
The analog beamforming matrix is constructed with $N_s$ rows of $\bf{U}$, i.e.,
\begin{equation}\label{ABF matrix}
\bf{F} = \bf{\Psi} \bf{U},
\end{equation}
where ${\bf{\Psi}} = \left[ {\bf{e}}_{i_1}, {\bf{e}}_{i_2},\dots, {\bf{e}}_{i_{N_s}} \right]^T$ and ${\bf{e}}_j \in \mathbb{Z}^{M \times 1}$ is a vector with the $j$th element being $1$ and zero elsewhere. Since $\bf{U}$ is invariable, the beam selection matrix $\bf{\Psi}$ plays a decisive role in the analog beamforming. In this paper, we will investigate the effect of $\bf{\Psi}$ on the achievable rate and exploit this in the beam selection schemes in the subsequent sections. In order to simplify the expressions in the following analysis, we adopt the same notation ${\bf{F}}$ to represent the uplink and the downlink analog beamformer. It should be noted that ${\bf{F}}$ may consist of different DFT beams in uplink and downlink.

\subsection{Channel Model}
To fully describe the characteristics of the wireless channel, we write the multiuser MIMO channel as
\begin{equation}\label{multiuser channel G}
{\bf G} = {\bf{HD}}^{\frac{1}{2}},
\end{equation}
where ${\bf{D}} = {\rm diag}\{ \beta_1,\beta_2,\dots,\beta_{N_u} \}$, in which $\beta_k \in \mathbb{R}^+$ reflects the energy of the $k$th user channel, and ${\bf{H}} \in \mathbb{C}^{M \times N_u}$ denotes the fast fading factor matrix which models the propagation condition of the channel. Here, we focus on the Ricean fading condition, so that $\bf{H}$ is written as \cite{Zhang2017}
\begin{equation}\label{multiuser channel H}
{\bf{H}} = \bar{\bf{H}} \left[ {\bf{\Omega}} \left( {\bf{\Omega}}+{\bf{I}}_{N_u} \right)^{-1} \right]^{\frac{1}{2}}
+ {\bf{H}}_w \left[ \left( {\bf{\Omega}}+{\bf{I}}_{N_u} \right)^{-1} \right]^{\frac{1}{2}},
\end{equation}
where ${\bf{\Omega}} = {\rm diag} \left( K_1, K_2,\dots, K_{N_u} \right)$, in which $K_k$ is the Ricean $K$-factor of the $k$th user channel, $\bar{\bf{H}}\in \mathbb{C}^{M \times N_u}$ is the deterministic LoS component with the $k$th column ${\bar{\bf{h}}}_k$ referring to user $k$, ${\bf{H}}_w \in \mathbb{C}^{M \times N_u}$ denotes the random component with independent and identically distributed (i.i.d.) elements, and each element of ${\bf{H}}_w$ is a complex Gaussian random variable with zero mean and unit variance.

\subsection{Signal Model}
In the uplink, the BS receives the signals from all the $N_u$ users. We assume that each user has equal transmit power. Then, the uplink received signal vector at the BS antennas can be written as
\begin{equation}\label{uplink received signal}
{\bf{r}} = \sqrt {{P_{avg}}} {\bf{Gs}} + {\bf{n}},
\end{equation}
where ${P_{avg}}$ is the transmit power of each user, ${\bf{G}}$ is defined in \eqref{multiuser channel G}, ${\bf{s}} \in \mathbb{C}^{{N_u} \times 1}$ is the transmit signal vector satisfying $\mathbb{E} \left\{ {{\bf{s}}{{\bf{s}}^H}} \right\} = {{\bf{I}}_{{N_u}}}$, ${\bf{n}} \in \mathbb{C}^{M \times 1}$ denotes the complex Gaussian noise vector, and each element of ${\bf{n}}$ has zero mean and unit variance. Due to the hybrid beamforming structure at the BS, the received signals will be firstly analog beamformed via the Butler matrix network, and then equalized via the digital beamformer. Therefore, $\bf{r}$ is further processed by
\begin{equation}\label{uplink processed signal}
{\bf{y}} = {{\bf W}_U}{{\bf F}} {\bf{r}} = \sqrt {{P_{avg}}} {{\bf W}_U} {{\bf F}} {\bf{Gs}} + {{\bf W}_U}{{\bf F}}{\bf{n}},
\end{equation}
where ${{\bf W}_U} \in \mathbb{C}^{{N_u} \times {N_s}}$ is the digital beamformer which is considered as the MIMO receiver in the uplink. Then we use the beamformed signal $\bf{y}$ to estimate the original signal $\bf{s}$. It should be noted that the noise is also beamformed. Let us analyze the components of $\bf{y}$ and write the $k$th data stream as
\begin{equation}\label{uplink processed signal stream k}
y_k = \sqrt {P_{avg}} {\bf{w}}_{U,k}{{\bf F}}{{\bf{g}}_{k}}{s_k} + \sum\limits_{j \ne k} {\sqrt {P_{avg}} {\bf{w}}_{U,k}{{\bf F}}{{\bf{g}}_{j}}{s_j}}  + {\bf{w}}_{U,k}{{\bf F}}{\bf{n}},
\end{equation}
where ${\bf{w}}_{U,k}$ and ${\bf{g}}_{k}$ are the $k$th row and column vectors of ${\bf W}_U$ and $\bf{G}$, respectively. We can easily find that in addition to the colored noise, inter-user interference exists as well.

Similarly, in the downlink, the BS transmits the hybrid beamformed signals to all users simultaneously. The received signal vector at the user side is expressed as
\begin{equation}\label{downlink received signal}
{\bf{r}} = \sqrt P {\bf{G}}^T {{\bf F}}^T {{\bf W}_D} {\bf{x}} + {\bf{n}},
\end{equation}
where $P$ is the total transmit power at the BS, ${{\bf W}_D} \in \mathbb{C}^{{N_s} \times {N_u}}$ is the downlink digital precoder satisfying ${\left\| {{\bf W}_D} \right\|_F}{\rm{ = 1}}$, ${\bf{x}} \in \mathbb{C}^{{N_u} \times 1}$ is the transmit signal vector satisfying $\mathbb{E} \left\{ {{\bf{x}}{{\bf{x}}^H}} \right\} = {{\bf{I}}_{{N_u}}}$, and ${\bf{n}} \in \mathbb{C}^{{N_u} \times 1}$ is the complex Gaussian noise vector with each element having zero mean and unit variance. For the $k$th user, the received signal also contains the target signal, the interference and the noise, i.e.,
\begin{equation}\label{downlink received signal at user k}
r_{k} = \sqrt P {\bf{g}}_k^T{{\bf F}}^T{{\bf{w}}_{D,k}}{x_k} + \sum\limits_{j \ne k} {\sqrt P {\bf{g}}_k^T{{\bf F}}^T{{\bf{w}}_{D,j}}{x_j}}  + {n_{k}},
\end{equation}
where ${\bf{w}}_{D,k}$ is the $k$th column vector of ${{\bf W}_D}$.

\section{Uplink Rate Analysis}\label{Sec:Uplink Rate}
To perform analog beam selection for the DFT-based hybrid beamforming system, we first choose proper digital beamformers and analyze their performance assuming that the analog beamformer $\bf{F}$ is fixed. Considering the reduced dimensional processing at the baseband module, conventional MIMO techniques are applicable in the digital beamforming design. In this section, we focus on the uplink and evaluate two popular linear receivers, namely the ZF and MRC receivers, by deriving the approximations of the achievable rates.

\subsection{ZF Receiver}
If the ZF receiver is used in the uplink, ${\bf W}_U$ is written as
\begin{equation}\label{ZF receiver}
{\bf{W}}_U = \sqrt {\frac{1}{{{P_{avg}}}}} {\bf{G}}_{eq}^\dag  = \sqrt {\frac{1}{{{P_{avg}}}}} {\left( {{\bf{G}}_{eq}^H{{\bf{G}}_{eq}}} \right)^{ - 1}}{\bf{G}}_{eq}^H,
\end{equation}
where
\begin{equation}\label{effective uplink channel}
{{\bf{G}}_{eq}} = {\bf{FG}}
\end{equation}
is the effective channel seen from the air interface. Then, \eqref{uplink processed signal} is expressed as
\begin{equation}\label{ZF uplink signal}
{\bf{y}} = {\bf{s}} + \sqrt {\frac{1}{{{P_{avg}}}}} {\bf{G}}_{eq}^\dag {\bf{Fn}},
\end{equation}
which will be used for detection. Since ${\bf{F}}$ is abstracted from the DFT matrix and ${\bf{n}}$ is a multivariable Gaussian random vector, ${\bf{Fn}}$ can be seen as a new $N_s$-dimensional complex Gaussian vector with each element having zero mean and unit variance. Hence, the achievable rate is calculated as
\begin{equation}\label{ZF UL rate real1}
{R^{\rm{ZF}}} = \sum\limits_{k = 1}^{{N_u}} \mathbb{E}{\left\{ {{{\log }_2}\left( {1 + \frac{{{P_{avg}}}}{{{{\left[ {{{\left( {{\bf{G}}_{eq}^H{{\bf{G}}_{eq}}} \right)}^{ - 1}}} \right]}_{k,k}}}}} \right)} \right\}},
\end{equation}
where $\left[{\bf{A}}\right]_{m,n}$ represents the $(m,n)$th entry of matrix $\bf{A}$.

Based on \eqref{ZF UL rate real1}, we provide the approximation of the achievable rate of the ZF receiver and have the following theorem.

\begin{theorem}\label{UL ZF appr1}
When the ZF receiver is adopted in the uplink of the DFT-based hybrid beamforming system, the achievable rate can be approximated as
\begin{equation}\label{ZF UL rate appr1}
R_{{\rm{App}}}^{\rm{ZF}} \!=\! \sum\limits_{k = 1}^{{N_u}} {{{\log }_2}\left( {1 \!+\! {P_{avg}}{\beta _k}{\varepsilon _k} \exp \left( \psi \left( {{N_s} - {N_u} + 1} \right) \right)} \right)},
\end{equation}
where $\psi \left(  \cdot  \right)$ denotes the digamma function,
\begin{equation}\label{ZF UL rate appr1 epsk}
{\varepsilon _k} = \frac{\prod\limits_{i = 1}^{{N_u}} {{\alpha _i}}}{\prod\limits_{i = 1}^{{N_u} - 1} {{{\bar \alpha }_{k,i}}}},
\end{equation}
${\left\{ {{\alpha _i}} \right\}_{i = 1,\dots,{N_u}}}$ are the eigenvalues of
\begin{equation}\label{central Wishart covariance}
{\bf{\hat \Sigma }} = {\left( {{\bf{\Omega }} + {{\bf{I}}_{{N_u}}}} \right)^{ - 1}} + \frac{1}{{{N_s}}}{{\bf{ T}}^H}{\bf{T}}
\end{equation}
sorted in an ascending order, ${\bf{T}} = {\bf{F\bar H}}{\left[ {{\bf{\Omega }}{{\left( {{\bf{\Omega }} + {{\bf{I}}_{{N_u}}}} \right)}^{ - 1}}} \right]^{\frac{1}{2}}}$, ${\left\{ {{{\bar \alpha }_{k,i}}} \right\}_{i = 1,\dots,{N_u} - 1}}$ are the eigenvalues of
\begin{equation}\label{covariance user k}
{{{\bf{\hat \Sigma }}}_k} = {\left( {{{{\bf{\bar \Omega }}}_k} + {{\bf{I}}_{{N_u} - 1}}} \right)^{ - 1}} + \frac{1}{{{N_s}}}{\bf{T}}_k^H{{\bf{T}}_k}
\end{equation}
in an increasing order, ${{\bf{T}}_k} = {\bf{F}}{{{\bf{\bar H}}}_k}{\left[ {{{{\bf{\bar \Omega }}}_k}{{\left( {{{{\bf{\bar \Omega }}}_k} + {{\bf{I}}_{{N_u} - 1}}} \right)}^{ - 1}}} \right]^{\frac{1}{2}}}$, ${{{\bf{\bar H}}}_k}$ is ${\bf{\bar H}}$ with the $k$th column removed, and ${{\bf{\bar \Omega }}_k}$ is ${\bf{\Omega}}$ with the $k$th row and the $k$th column removed.
\end{theorem}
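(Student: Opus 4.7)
The plan is to reduce the expectation of the ZF inverse-diagonal to a ratio of determinants of two non-central Wishart-type matrices and then invoke two standard approximations---one on the rate and one on the expected log-determinant---that turn the random ratio into the deterministic quantity $\epsilon_k\exp(\psi(N_s-N_u+1))$.

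First I would exploit the DFT structure. Because $\mathbf{F}$ is built from $N_s$ distinct rows of the unitary DFT matrix $\mathbf{U}$, one has $\mathbf{F}\mathbf{F}^H=\mathbf{I}_{N_s}$, so $\mathbf{F}\mathbf{H}_w$ still has i.i.d.\ $\mathcal{CN}(0,1)$ entries. Hence $\check{\mathbf{X}}:=\mathbf{F}\mathbf{H}$ has independent columns with $\check{\mathbf{x}}_k \sim \mathcal{CN}(\mathbf{t}_k,(K_k+1)^{-1}\mathbf{I}_{N_s})$, where $\mathbf{t}_k$ is the $k$th column of $\mathbf{T}$. Writing $\mathbf{G}_{eq}=\check{\mathbf{X}}\mathbf{D}^{1/2}$, the ZF SINR for user $k$ becomes $P_{avg}\beta_k/[(\check{\mathbf{X}}^H\check{\mathbf{X}})^{-1}]_{k,k}$; Cramer's rule (or the Schur-complement identity) then gives $[(\check{\mathbf{X}}^H\check{\mathbf{X}})^{-1}]_{k,k}=\det(\check{\mathbf{X}}_{-k}^H\check{\mathbf{X}}_{-k})/\det(\check{\mathbf{X}}^H\check{\mathbf{X}})$, turning each rate term into the expectation of the log of a determinant ratio.

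Next I would pass the expectation through the logarithm via the log-normal-style approximation $\mathbb{E}\{\log_2(1+cY)\}\approx\log_2(1+c\exp(\mathbb{E}\{\log Y\}))$, applied with $Y=\det(\check{\mathbf{X}}^H\check{\mathbf{X}})/\det(\check{\mathbf{X}}_{-k}^H\check{\mathbf{X}}_{-k})$. The task then reduces to evaluating the two expected log-determinants, for which I would invoke the deterministic-equivalent identity $\mathbb{E}\{\log\det\mathbf{W}\}\approx\sum_{i=0}^{n-1}\psi(N_s-i)+\log\det(\mathbb{E}\{\mathbf{W}\}/N_s)$---exact for central complex Wishart and asymptotically tight in $N_s$ for the non-central case. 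Direct moment computation yields $\mathbb{E}\{\check{\mathbf{X}}^H\check{\mathbf{X}}\}=N_s\hat{\boldsymbol{\Sigma}}$ and $\mathbb{E}\{\check{\mathbf{X}}_{-k}^H\check{\mathbf{X}}_{-k}\}=N_s\hat{\boldsymbol{\Sigma}}_k$, matching the theorem's definitions. Subtracting the two log-determinants, the $\psi$-sums telescope to the single term $\psi(N_s-N_u+1)$, while the log-dets combine as $\log(\det\hat{\boldsymbol{\Sigma}}/\det\hat{\boldsymbol{\Sigma}}_k)=\log\epsilon_k$; exponentiating and summing over $k$ then produces the claimed $R_{\mathrm{App}}^{\mathrm{ZF}}$.

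The main obstacle is the expected log-determinant step: the exact non-central complex Wishart formulas involve hypergeometric functions of a matrix argument and are useless for beam selection, so only the deterministic-equivalent version is tractable, and its finite-$N_s$ accuracy has to be checked numerically. A secondary subtlety is that the columns of $\check{\mathbf{X}}$ are independent but not identically distributed---the Ricean factors scale column $k$'s covariance by $(K_k+1)^{-1}$---which is precisely why the effective covariance $\hat{\boldsymbol{\Sigma}}$ carries the $(\boldsymbol{\Omega}+\mathbf{I}_{N_u})^{-1}$ term rather than a pure identity.
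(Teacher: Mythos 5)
Your proposal is correct and follows essentially the same route as the paper's proof: reduce $[(\mathbf{G}_{eq}^H\mathbf{G}_{eq})^{-1}]_{k,k}$ to a ratio of determinants, pull the expectation inside the log via the Jensen-type approximation, and evaluate the two expected log-determinants by matching the non-central Wishart to a central one with effective covariance $\hat{\boldsymbol{\Sigma}}=\mathbb{E}\{\mathbf{H}_{eq}^H\mathbf{H}_{eq}\}/N_s$, after which the digamma sums telescope to $\psi(N_s-N_u+1)$ and the determinant ratio gives $\varepsilon_k$. The only cosmetic difference is that the paper splits your ``deterministic-equivalent'' log-det identity into two cited steps (the Steyn--Roux non-central-to-central approximation followed by the exact central-Wishart log-determinant formula), which is the same computation.
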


\begin{proof}
See Appendix \ref{Apdx:TheoULZF1}.
\end{proof}

From \eqref{ZF UL rate appr1}, we can find that when $N_s$, $N_u$ and $\beta_k$ are fixed, the rate of the ZF receiver increases in proportional to ${\varepsilon _k}$, which is decided by the selected analog beams and the Ricean components of the user channels.
For better understanding of the effect of ${\varepsilon _k}$ on the achievable rate, we give insights on some special cases. We start with the asymptotic analysis of the achievable rate under Rayleigh fading conditions.

\begin{corollary}\label{ZF UL appr1 cor1}
When Ricean fading reduces to Rayleigh fading, i.e., $K_k = 0$ for $k = 1,\dots,N_u$, the approximation \eqref{ZF UL rate appr1} is rewritten as
\begin{equation}\label{ZF UL rate cor1}
R_{{\rm{App}}}^{\rm{ZF}} = \sum\limits_{k = 1}^{{N_u}} {{{\log }_2}\left( {1 + {P_{avg}}{\beta _k} \exp \left( \psi \left( {{N_s} - {N_u} + 1} \right) \right)} \right)}.
\end{equation}
\end{corollary}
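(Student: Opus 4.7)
The plan is to verify that the Rayleigh-fading assumption $K_k = 0$ for all $k$ collapses the coefficient $\varepsilon_k$ in Theorem \ref{UL ZF appr1} to unity, after which the claimed formula \eqref{ZF UL rate cor1} follows by direct substitution into \eqref{ZF UL rate appr1}. No new probabilistic argument is needed; everything reduces to inspecting the two covariance matrices \eqref{central Wishart covariance} and \eqref{covariance user k} in the degenerate regime.

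First I would set $\mathbf{\Omega} = \mathrm{diag}(K_1,\dots,K_{N_u}) = \mathbf{0}$, so that $(\mathbf{\Omega} + \mathbf{I}_{N_u})^{-1} = \mathbf{I}_{N_u}$. The LoS-weighting factor in the Ricean channel model \eqref{multiuser channel H} is $[\mathbf{\Omega}(\mathbf{\Omega} + \mathbf{I}_{N_u})^{-1}]^{1/2}$, which vanishes because $\mathbf{\Omega}^{1/2} = \mathbf{0}$; consequently the matrix $\mathbf{T} = \mathbf{F}\bar{\mathbf{H}}[\mathbf{\Omega}(\mathbf{\Omega}+\mathbf{I}_{N_u})^{-1}]^{1/2}$ appearing in \eqref{central Wishart covariance} is the zero matrix. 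The covariance therefore simplifies to $\hat{\mathbf{\Sigma}} = \mathbf{I}_{N_u}$, whose eigenvalues are all equal to one, giving $\prod_{i=1}^{N_u}\alpha_i = 1$.

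An identical argument applies to the user-$k$-excluded quantities: $\bar{\mathbf{\Omega}}_k$ is a principal submatrix of $\mathbf{\Omega}$ and so is also zero, forcing $\mathbf{T}_k = \mathbf{0}$ and $\hat{\mathbf{\Sigma}}_k = \mathbf{I}_{N_u - 1}$, so that $\bar{\alpha}_{k,i} = 1$ for every $i = 1,\dots,N_u - 1$ and hence $\prod_{i=1}^{N_u - 1}\bar{\alpha}_{k,i} = 1$. Substituting these two products into \eqref{ZF UL rate appr1 epsk} yields $\varepsilon_k = 1$ for every $k$, and plugging $\varepsilon_k = 1$ into \eqref{ZF UL rate appr1} immediately produces \eqref{ZF UL rate cor1}.

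There is no real obstacle: the corollary is a pure specialization. The only point worth flagging is conceptual rather than technical, namely that under Rayleigh fading the analog beam selection matrix $\mathbf{F}$ and the deterministic LoS matrix $\bar{\mathbf{H}}$ drop out of $\varepsilon_k$ entirely, since $\bar{\mathbf{H}}$ enters the approximation only through $\mathbf{T}$ and $\mathbf{T}_k$, both of which are proportional to the vanishing factor $\mathbf{\Omega}^{1/2}$. This recovers the intuition that, without a LoS component, DFT beam selection cannot shape the ZF achievable rate beyond what is already captured by the large-scale coefficients $\beta_k$.
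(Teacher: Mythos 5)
Your proposal is correct and follows exactly the paper's own argument: with $K_k=0$ the matrices $\mathbf{T}$ and $\mathbf{T}_k$ vanish, so $\hat{\mathbf{\Sigma}}$ and $\hat{\mathbf{\Sigma}}_k$ reduce to identity matrices, all eigenvalues equal one, and $\varepsilon_k=1$ gives \eqref{ZF UL rate cor1}. The only difference is that you spell out the intermediate steps (and the interpretive remark about $\mathbf{F}$ and $\bar{\mathbf{H}}$ dropping out) in more detail than the paper does.
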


\begin{proof}
In this case, ${\bf{\hat \Sigma }}$ and ${{{\bf{\hat \Sigma }}}_k}$ are reduced to identity matrices. As such, their eigenvalues satisfy ${\alpha _k}=1 $ for $k = 1,\dots,{N_u}$ and ${\bar \alpha }_{k,i}=1$ for $k = 1,\dots,{N_u}, i = 1,\dots,{N_u} - 1$. Hence, $\varepsilon _k = 1$ and we get the desired results.
\end{proof}

With Rayleigh fading, LoS paths no longer exist, and $\varepsilon _k$ has no effect on the achievable rate. The beam selection results make no difference to the system performance. Thus, we can choose arbitrary beams. Besides, the achievable rate can be enhanced by increasing the number of RF chains. Note that if we set $N_s = M$, then \eqref{ZF UL rate cor1} coincides with \emph{Proposition 2} of \cite{Matthaiou2011}, which illustrates the effectiveness of \emph{Theorem \ref{UL ZF appr1}}.

We regard the Rayleigh fading case as a reference and now investigate ${\varepsilon _k}$ in Ricean fading environments. Obviously, ${\bf{\hat \Sigma }}$ is Hermitian and ${{{\bf{\hat \Sigma }}}_k}$ can be seen as ${\bf{\hat \Sigma }}$ with the $k$th row and the $k$th column removed. According to \emph{Theorem 4.3.8} of \cite{Horn1985}, the eigenvalues of ${\bf{\hat \Sigma }}$ and ${{{\bf{\hat \Sigma }}}_k}$ satisfy
\begin{equation}\label{eigenvalues comparison}
\alpha_1 \le {\bar \alpha }_{k,1} \le \alpha_2 \le \cdots \le {\bar \alpha }_{k,N_u-1} \le \alpha_{N_u}.
\end{equation}

We now move on to another special case of Ricean fading and derive the asymptotic rate in the following corollary.

\begin{corollary}\label{ZF UL appr1 cor2}
In the case that $K_k \to \infty $ for $k=1,\dots,N_u$, if the effective LoS components hold orthogonality, i.e., ${{\bf{\bar h}}_j^H{{\bf{F}}^H} {\bf{F}}{{{\bf{\bar h}}}_k}}=0$ for $j \ne k$, then \eqref{ZF UL rate appr1} approaches to
\begin{equation}\label{ZF UL rate cor2 approach}
R_{{\rm{App}}}^{\rm{ZF}} \to \sum\limits_{k = 1}^{{N_u}} {R_{{\rm{App}},k}^{\rm{ZF}}} ,
\end{equation}
where
\begin{equation}\label{ZF UL rate cor2}
{R_{{\rm{App}},k}^{\rm{ZF}}} = {{{\log }_2}\left( {1 \!+\! {P_{avg}}{\beta _k}{\hat \varepsilon _k} \exp \left( \psi \left( {{N_s} \!-\! {N_u} \!+\! 1} \right) \right)} \right)},
\end{equation}
and ${\hat \varepsilon _k}={\left\| {{\bf{F\bar h}}_k} \right\|^2}/{N_s}$.
\end{corollary}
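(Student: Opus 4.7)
The plan is to reduce the computation of $\varepsilon_k$ to a ratio of determinants and then exploit the orthogonality hypothesis to diagonalize both matrices in the limit $K_k\to\infty$.

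First I would observe that, since the products in \eqref{ZF UL rate appr1 epsk} are products of all eigenvalues of the Hermitian matrices $\hat{\bf\Sigma}$ and $\hat{\bf\Sigma}_k$, one has $\prod_{i=1}^{N_u}\alpha_i=\det(\hat{\bf\Sigma})$ and $\prod_{i=1}^{N_u-1}\bar\alpha_{k,i}=\det(\hat{\bf\Sigma}_k)$, so that $\varepsilon_k=\det(\hat{\bf\Sigma})/\det(\hat{\bf\Sigma}_k)$. This replaces the sorted-eigenvalue description by a quantity that is straightforward to take limits of. Once this is in place, plugging back into \eqref{ZF UL rate appr1} makes it clear that only these two determinants need to be tracked as $K_k\to\infty$.

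Next I would take the limit termwise. As $K_k\to\infty$ for every $k$, the matrix $(\bf\Omega+I_{N_u})^{-1}\to 0$ while $\bf\Omega(\bf\Omega+I_{N_u})^{-1}\to I_{N_u}$; consequently ${\bf T}\to{\bf F}\bar{\bf H}$ and
\begin{equation}
\hat{\bf\Sigma}\to\tfrac{1}{N_s}\bar{\bf H}^H{\bf F}^H{\bf F}\bar{\bf H}.
\end{equation}
By the orthogonality hypothesis $\bar{\bf h}_j^H{\bf F}^H{\bf F}\bar{\bf h}_k=0$ for $j\neq k$, the right-hand side is diagonal with $(k,k)$-entry $\|{\bf F}\bar{\bf h}_k\|^2/N_s$, so $\det(\hat{\bf\Sigma})\to N_s^{-N_u}\prod_{k=1}^{N_u}\|{\bf F}\bar{\bf h}_k\|^2$. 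Exactly the same argument applied to $\hat{\bf\Sigma}_k$, using that $\bar{\bf H}_k$ collects the LoS columns other than the $k$th, yields $\det(\hat{\bf\Sigma}_k)\to N_s^{-(N_u-1)}\prod_{j\neq k}\|{\bf F}\bar{\bf h}_j\|^2$. Forming the ratio, all factors with $j\neq k$ cancel and
\begin{equation}
\varepsilon_k\;\longrightarrow\;\frac{\|{\bf F}\bar{\bf h}_k\|^2}{N_s}=\hat\varepsilon_k.
\end{equation}
Substituting this into \eqref{ZF UL rate appr1} and recognizing that the sum decouples across users then yields \eqref{ZF UL rate cor2 approach}--\eqref{ZF UL rate cor2}.

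The main obstacle I anticipate is not the algebra, which is essentially a determinant computation, but justifying that the limit can be taken inside \eqref{ZF UL rate appr1} and inside the ratio $\det(\hat{\bf\Sigma})/\det(\hat{\bf\Sigma}_k)$. This requires the $N_u$ LoS vectors $\{{\bf F}\bar{\bf h}_k\}$ to be linearly independent after analog beamforming, so that both determinants stay bounded away from zero along the limit; the orthogonality assumption guarantees precisely this, provided each $\|{\bf F}\bar{\bf h}_k\|>0$, which is the natural non-degeneracy condition and can be stated explicitly at the outset of the proof. Once this caveat is recorded, the logarithm is continuous and the rest is immediate.
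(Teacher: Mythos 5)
Your proposal is correct and follows essentially the same route as the paper's Appendix B: both take $K_k\to\infty$ so that $\hat{\bf\Sigma}\to\frac{1}{N_s}\bar{\bf H}^H{\bf F}^H{\bf F}\bar{\bf H}$, use orthogonality to make it diagonal with entries $\|{\bf F}\bar{\bf h}_k\|^2/N_s$, and observe that the ratio defining $\varepsilon_k$ collapses to the single $k$th eigenvalue (the paper phrases this as $\sum_i\ln\alpha_i-\sum_i\ln\bar\alpha_{k,i}=\ln\alpha_k$, which is your determinant-ratio cancellation in logarithmic form). Your added remark on the non-degeneracy condition $\|{\bf F}\bar{\bf h}_k\|>0$ is a reasonable refinement the paper leaves implicit, but it does not change the argument.
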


\begin{proof}
See Appendix \ref{Apdx:TheoULZF1cor2}.
\end{proof}

We find that ${R_{{\rm{App}},k}^{\rm{ZF}}}$ is in proportion to $\left\| {{\bf{F\bar h}}_k} \right\|^2$ when ${P_{avg}}$, ${\beta _k}$ and $N_s$ are fixed. If the LoS paths are completely projected on the selected beams, then $\left\| {{\bf{F\bar h}}_k} \right\|^2 = M$ and ${\varepsilon _k^{(1)}} \gg 1$ when $M \gg N_s$. On the contrary, if there is very little power projected on the beams, then $\left\| {{\bf{F\bar h}}_k} \right\|^2 \approx 0$ and ${\varepsilon _k^{(1)}} \approx 0$. Therefore, good beam selection results are critical under Ricean fading conditions. It is suggested to select beams that cover the LoS paths and meanwhile contribute to the orthogonality among the effective LoS paths from different users.

\subsection{MRC Receiver}
MRC is another well-known linear receiver, which combines the received signals on multiple RF chains to enhance the signal power. When adopting the MRC receiver in the uplink, ${\bf{W}}_U$ is expressed as
\begin{equation}\label{MRC receiver}
{\bf{W}}_U = {\bf{G}}_{eq}^H,
\end{equation}
and the combined signal vector is
\begin{equation}\label{MRC combined signal vector}
{\bf{y}} = \sqrt {{P_{avg}}} {\bf{G}}_{eq}^H {\bf{G}}_{eq}{\bf{s}} + {\bf{G}}_{eq}^H{\bf{F}}{\bf{n}}.
\end{equation}
Then the achievable rate of the MRC receiver is calculated as
\begin{equation}\label{MRC UL rate real}
{R^{\rm{MRC}}} = \sum\limits_{k = 1}^{{N_u}} \mathbb{E}{\left\{ {{{\log }_2}\left( {1 + {\gamma _k}} \right)} \right\}} ,
\end{equation}
where
\begin{equation}\label{MRC UL rate gammak}
{\gamma _k} = \frac{{{P_{avg}}{{\left\| {{{\bf{g}}_{eq,k}}} \right\|}^4}}}{{{P_{avg}}\sum\limits_{j \ne k} {{{\left| {{\bf{g}}_{eq,k}^H{{\bf{g}}_{eq,j}}} \right|}^2}}  + {{\left\| {{{\bf{g}}_{eq,k}}} \right\|}^2}}}
\end{equation}
reflects the signal to interference-and-noise ratio (SINR), and ${{\bf{g}}_{eq,k}}$ denotes the $k$th column vector of ${{\bf{G}}_{eq}}$. Taking a similar approach as previously, we have the following theorem to evaluate the achievable rate performance.

\begin{theorem}\label{UL MRC appr}
When adopting the MRC receiver in the DFT-based hybrid beamforming system, the uplink achievable rate can be approximated as
\begin{equation}\label{MRC UL rate appr}
R_{{\rm{App}}}^{\rm{MRC}} = \sum\limits_{k = 1}^{{N_u}} {{{\log }_2}\left( {1 + \frac{{\frac{{P_{avg}}{{\beta _k}}}{{ {{K_k} + 1} }}{\chi _1^{(k)}}}} {{\sum\limits_{j \ne k} {\frac{{P_{avg}}{{\beta _j}}}{{{K_j} + 1}}{\chi _{2,j}^{(k)}}} + {\chi _3^{(k)}}}}} \right)},
\end{equation}
where
\begin{align}
{\chi _3^{(k)}}& \triangleq {K_k}{\left\| {{\bf{F}}{{{\bf{\bar h}}}_k}} \right\|^2} + {N_s},\label{MRC UL rate appr x3}\\
{\chi _1^{(k)}}& \triangleq {{\chi _3^{(k)2}}} + 2{\chi _3^{(k)}} - N_s,\label{MRC UL rate appr x1}
\end{align}
and
\begin{equation}\label{MRC UL rate appr x2}
{\chi _{2,j}^{(k)}} \triangleq {{K_k}{K_j}{{\left| {{\bf{\bar h}}_j^H{{\bf{F}}^H} {\bf{F}}{{{\bf{\bar h}}}_k}} \right|}^2} + {\chi _3^{(j)}} + {\chi _3^{(k)}} - N_s}.
\end{equation}
\end{theorem}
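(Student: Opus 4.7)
The plan is to apply the now-standard large-scale approximation
\[
\mathbb{E}\!\left\{\log_2\!\left(1+\tfrac{X}{Y}\right)\right\}\;\approx\;\log_2\!\left(1+\tfrac{\mathbb{E}\{X\}}{\mathbb{E}\{Y\}}\right),
\]
which is accurate when the numerator and denominator of the SINR concentrate around their means (as is typical for large arrays in the massive MIMO literature). With $X=P_{avg}\|{\bf g}_{eq,k}\|^4$ and $Y=P_{avg}\sum_{j\neq k}|{\bf g}_{eq,k}^H{\bf g}_{eq,j}|^2+\|{\bf g}_{eq,k}\|^2$, the whole task reduces to evaluating three deterministic quantities: $\mathbb{E}\{\|{\bf g}_{eq,k}\|^2\}$ (which should yield the factor $\chi_3^{(k)}$ in the noise term), $\mathbb{E}\{\|{\bf g}_{eq,k}\|^4\}$ (which should yield $\chi_1^{(k)}$), and $\mathbb{E}\{|{\bf g}_{eq,k}^H{\bf g}_{eq,j}|^2\}$ for $j\neq k$ (which should yield $\chi_{2,j}^{(k)}$).

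The first step is to use the Ricean decomposition \eqref{multiuser channel H} to write
\[
{\bf g}_{eq,k}={\bf F}{\bf g}_k=\sqrt{\beta_k}\!\left(\sqrt{\tfrac{K_k}{K_k+1}}\,{\bf F}\bar{\bf h}_k+\sqrt{\tfrac{1}{K_k+1}}\,{\bf F}{\bf h}_{w,k}\right)\;\triangleq\;\sqrt{\beta_k}({\bf a}_k+{\bf z}_k),
\]
where ${\bf a}_k$ is deterministic. The crucial structural observation is that the rows of ${\bf F}$ are distinct rows of a unitary DFT matrix, so ${\bf F}{\bf F}^H={\bf I}_{N_s}$; hence ${\bf F}{\bf h}_{w,k}\sim\mathcal{CN}({\bf 0},{\bf I}_{N_s})$ and ${\bf z}_k\sim\mathcal{CN}({\bf 0},\tfrac{1}{K_k+1}{\bf I}_{N_s})$, and ${\bf z}_k,{\bf z}_j$ are independent for $j\neq k$. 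The second moment is then immediate: $\mathbb{E}\{\|{\bf g}_{eq,k}\|^2\}=\beta_k(\|{\bf a}_k\|^2+\mathbb{E}\|{\bf z}_k\|^2)=\tfrac{\beta_k}{K_k+1}(K_k\|{\bf F}\bar{\bf h}_k\|^2+N_s)=\tfrac{\beta_k}{K_k+1}\chi_3^{(k)}$, matching the claim.

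For the cross-term, I would expand $({\bf a}_k+{\bf z}_k)^H({\bf a}_j+{\bf z}_j)$ into four pieces, take the squared modulus, and drop every term that is odd in ${\bf z}_k$ or ${\bf z}_j$ (by zero-mean independence). The four surviving contributions are $|{\bf a}_k^H{\bf a}_j|^2$, $\mathbb{E}|{\bf a}_k^H{\bf z}_j|^2=\tfrac{\|{\bf a}_k\|^2}{K_j+1}$, $\mathbb{E}|{\bf z}_k^H{\bf a}_j|^2=\tfrac{\|{\bf a}_j\|^2}{K_k+1}$, and $\mathbb{E}|{\bf z}_k^H{\bf z}_j|^2=\tfrac{N_s}{(K_k+1)(K_j+1)}$; assembling them, pulling out $\tfrac{1}{(K_k+1)(K_j+1)}$, and recognizing $K_k\|{\bf F}\bar{\bf h}_k\|^2+K_j\|{\bf F}\bar{\bf h}_j\|^2+N_s=\chi_3^{(k)}+\chi_3^{(j)}-N_s$ reconstructs $\chi_{2,j}^{(k)}$.

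The main obstacle is the fourth moment $\mathbb{E}\{\|{\bf g}_{eq,k}\|^4\}$. I would write $\|{\bf F}{\bf h}_k\|^2=\mu+X+Y$ with $\mu=\|{\bf a}_k\|^2$, $X=2\mathrm{Re}({\bf a}_k^H{\bf z}_k)$, $Y=\|{\bf z}_k\|^2$, square it, and compute the six resulting expectations. Cross-terms $\mathbb{E}\{X\}$, $\mathbb{E}\{XY\}$ vanish as odd moments of a zero-mean complex Gaussian; the nontrivial pieces are $\mathbb{E}\{X^2\}=\tfrac{2K_k\|{\bf F}\bar{\bf h}_k\|^2}{(K_k+1)^2}$ (from the variance of a scalar Gaussian $\mathcal{CN}(0,\sigma^2\|{\bf a}_k\|^2)$) and $\mathbb{E}\{Y^2\}=\tfrac{N_s(N_s+1)}{(K_k+1)^2}$ (since $\|{\bf z}_k\|^2$ is a scaled chi-square of $2N_s$ real degrees with $\mathrm{Var}=N_s\sigma^4$). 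Collecting terms, the numerator factors as $\tfrac{1}{(K_k+1)^2}\big(K_k^2\|{\bf F}\bar{\bf h}_k\|^4+2K_k(N_s+1)\|{\bf F}\bar{\bf h}_k\|^2+N_s(N_s+1)\big)$; a direct expansion of $\chi_1^{(k)}=\chi_3^{(k)2}+2\chi_3^{(k)}-N_s$ gives exactly this polynomial, so $\mathbb{E}\{\|{\bf g}_{eq,k}\|^4\}=\tfrac{\beta_k^2}{(K_k+1)^2}\chi_1^{(k)}$. Substituting the three moments into the Jensen-style approximation and clearing the common factor $\tfrac{\beta_k}{K_k+1}$ from numerator and denominator yields \eqref{MRC UL rate appr}. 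The bookkeeping in the fourth moment is the only genuinely delicate step; everything else is linear algebra and standard Gaussian second-order calculus.
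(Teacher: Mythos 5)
Your proposal is correct and follows essentially the same route as the paper: the ratio-of-means approximation of the ergodic SINR (the paper's Eq.~\eqref{MRC UL rate appr proof1}, i.e.~\emph{Lemma 1} of \cite{Zhang2014}), the Ricean decomposition of ${\bf g}_{eq,k}$ with the observation that ${\bf F}{\bf h}_{w,k}\sim\mathcal{CN}({\bf 0},{\bf I}_{N_s})$ since ${\bf F}{\bf F}^H={\bf I}_{N_s}$, and the evaluation of the same three Gaussian moments, which match the paper's \eqref{MRC UL rate appr x3 proof}, \eqref{MRC UL rate appr x1 proof} and \eqref{MRC UL rate appr x2 proof} exactly. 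The only difference is that you spell out the fourth-moment and cross-term bookkeeping that the paper states without derivation.
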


\begin{proof}
See Appendix \ref{Apdx:TheoULMRC}.
\end{proof}

To analyze the effect of the analog beamformed LoS paths on the achievable rate, we first derive the asymptotic expression of the achievable rate in Rayleigh fading channels.

\begin{corollary}\label{MRC UL appr cor1}
When $K_k = 0$ for $k = 1,\dots,N_u$, Ricean fading reduces to Rayleigh fading, and \eqref{MRC UL rate appr} is rewritten as
\begin{equation}\label{MRC UL cor1}
R_{{\rm{App}}}^{\rm{MRC}}= \sum\limits_{k = 1}^{{N_u}} {{{\log }_2}\left( 1+ \frac{{P_{avg}}{\beta _k}\left(N_s+1 \right)}{\sum\limits_{j \ne k} {{P_{avg}}{\beta _j}}+1}  \right)}.
\end{equation}
\end{corollary}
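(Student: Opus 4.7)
The plan is to prove Corollary 3 by direct substitution of $K_k=0$ into the expressions $\chi_1^{(k)}$, $\chi_{2,j}^{(k)}$, and $\chi_3^{(k)}$ of Theorem 2, and then simplifying the resulting SINR-like ratio inside the logarithm. Because Theorem 2 already packages the MRC rate approximation into closed-form quantities that depend on the Ricean $K$-factors and on the LoS-projection terms $\|\mathbf{F}\bar{\mathbf{h}}_k\|^2$ and $|\bar{\mathbf{h}}_j^H\mathbf{F}^H\mathbf{F}\bar{\mathbf{h}}_k|^2$, the Rayleigh limit is obtained by simply killing all LoS contributions with $K_k=0$.

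First, I would note that at $K_k=0$ the scaling factors $\tfrac{1}{K_k+1}$ appearing in the numerator and in every term of the sum in the denominator of \eqref{MRC UL rate appr} both collapse to $1$. Next, substituting $K_k=0$ into \eqref{MRC UL rate appr x3} yields $\chi_3^{(k)}=N_s$ uniformly in $k$, regardless of the selected analog beams, since the LoS energy $\|\mathbf{F}\bar{\mathbf{h}}_k\|^2$ is completely suppressed by the $K_k$ prefactor. Feeding this into \eqref{MRC UL rate appr x1} gives $\chi_1^{(k)}=N_s^2+2N_s-N_s=N_s(N_s+1)$, while \eqref{MRC UL rate appr x2} reduces to $\chi_{2,j}^{(k)}=0+N_s+N_s-N_s=N_s$, because the $K_kK_j$ cross-term vanishes.

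Plugging these three simplifications into \eqref{MRC UL rate appr} gives
\begin{equation}
R_{\rm App}^{\rm MRC}=\sum_{k=1}^{N_u}\log_2\!\left(1+\frac{P_{avg}\beta_k\,N_s(N_s+1)}{\sum_{j\ne k}P_{avg}\beta_j\,N_s+N_s}\right),
\end{equation}
and cancelling the common factor $N_s$ in numerator and denominator of the SINR argument produces exactly \eqref{MRC UL cor1}. There is essentially no obstacle here: the only thing to be careful about is the book-keeping of the $K_k$-dependent prefactors in front of $\chi_1^{(k)}$ and $\chi_{2,j}^{(k)}$, so that the limit $K_k\to 0$ is taken consistently both inside the $\chi$'s and in the $\tfrac{1}{K_k+1}$ weights. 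Since the resulting expression is independent of $\mathbf{F}$, I would also remark, as a brief sanity check parallel to Corollary 1, that under Rayleigh fading the MRC achievable rate approximation becomes insensitive to the DFT beam selection matrix $\mathbf{\Psi}$, which is consistent with the intuition that the analog beamformer only shapes the deterministic LoS component.
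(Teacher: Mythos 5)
Your proof is correct and follows essentially the same direct-specialization route as the paper: setting $K_k=0$ gives $\chi_3^{(k)}=N_s$, $\chi_1^{(k)}=N_s(N_s+1)$, $\chi_{2,j}^{(k)}=N_s$, and the common factor $N_s$ cancels to yield \eqref{MRC UL cor1}. The only cosmetic difference is that the paper re-evaluates the three moments $\mathbb{E}\{\|\mathbf{g}_{eq,k}\|^4\}$, $\mathbb{E}\{|\mathbf{g}_{eq,k}^H\mathbf{g}_{eq,j}|^2\}$, and $\mathbb{E}\{\|\mathbf{g}_{eq,k}\|^2\}$ directly for the Rayleigh channel and inserts them into the intermediate approximation \eqref{MRC UL rate appr proof1}, whereas you substitute $K_k=0$ into the already-packaged $\chi$ quantities of Theorem \ref{UL MRC appr}; the two computations are identical in content.
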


\begin{proof}
In Ricean fading channels, ${{\bf{g}}_{eq,k}}$ is expressed as
\begin{equation}\label{MRC UL appr cor1 proof}
{{\bf{g}}_{eq,k}} = \sqrt {\beta _k} {\bf{Fh}}_{w,k}.
\end{equation}
Accordingly, the expectation items can be calculated as
\begin{align}
\mathbb{E}\left\{ {{{\left\| {{{\bf{g}}_{eq,k}}} \right\|}^4}} \right\} &= {\beta _k^2}{\left( {{N_s^2} + N_s} \right)},\label{MRC UL appr cor1 x1}\\
\mathbb{E}\left\{ {{{\left| {{\bf{g}}_{eq,k}^H{{\bf{g}}_{eq,j}}} \right|}^2}} \right\}&={\beta _k}{\beta _j}{N_s},\label{MRC UL appr cor1 x2}\\
\mathbb{E}\left\{ {{{\left\| {{{\bf{g}}_{eq,k}}} \right\|}^2}} \right\} &= {\beta _k}{N_s}.\label{MRC UL appr cor1 x3}
\end{align}
Applying \eqref{MRC UL appr cor1 x1}--\eqref{MRC UL appr cor1 x3} into \eqref{MRC UL rate appr proof1} which is shown in Appendix \ref{Apdx:TheoULMRC}, we can obtain \eqref{MRC UL cor1}, which completes the proof.
\end{proof}

Similarly, with Rayleigh fading, the analog beam selection results make no difference and we can choose arbitrary beams. Note that when it comes to the full-digital system and if large-scale fading coefficients satisfy $\beta_1 = \beta_2 = \cdots = \beta_{N_u} = 1$, the derived expression from \eqref{MRC UL cor1} is exactly in accordance with \cite{Zhang2014}. This verifies the effectiveness of \emph{Theorem \ref{UL MRC appr}}.

The following corollary provides the achievable rate limit in the case of pure Ricean fading.

\begin{corollary}\label{MRC UL appr cor3}
When $K_k \to \infty$ for $k = 1,\dots, N_u$ and the effective LoS components hold orthogonality, the achievable rate of the MRC receiver approaches to
\begin{equation}\label{MRC UL cor3 approach}
R_{{\rm{App}}}^{\rm{MRC}} \to \sum\limits_{k = 1}^{{N_u}} {R_{{\rm{App}},k}^{\rm{MRC}}},
\end{equation}
where
\begin{equation}\label{MRC UL cor3}
{R_{{\rm{App}},k}^{\rm{MRC}}} = {{{\log }_2}\left( 1+{P_{avg}}{\beta _k}{\left\| {{\bf{F\bar h}}_k} \right\|^2}  \right)}.
\end{equation}
\end{corollary}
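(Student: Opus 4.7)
The plan is to start from the expression \eqref{MRC UL rate appr} in Theorem \ref{UL MRC appr} and take the limit $K_k\to\infty$ after applying the orthogonality hypothesis. First I would use the assumption ${\bar{\mathbf{h}}}_j^H{\mathbf{F}}^H{\mathbf{F}}{\bar{\mathbf{h}}}_k=0$ for $j\neq k$ to kill the cross-term in $\chi_{2,j}^{(k)}$, which collapses \eqref{MRC UL rate appr x2} to
\begin{equation}
\chi_{2,j}^{(k)} = \chi_3^{(j)} + \chi_3^{(k)} - N_s = K_j\|{\mathbf{F}}{\bar{\mathbf{h}}}_j\|^2 + K_k\|{\mathbf{F}}{\bar{\mathbf{h}}}_k\|^2 + N_s.
\end{equation}
This is the key structural simplification: after orthogonality, every quantity appearing in the SINR depends on the beamformed LoS vectors only through the individual projected powers $\|\mathbf{F}\bar{\mathbf{h}}_k\|^2$.

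Next, I would substitute the explicit forms of $\chi_1^{(k)}$ and $\chi_3^{(k)}$ into the numerator and denominator of the SINR and track the dominant terms in $K_k$. For the numerator, since $\chi_3^{(k)} = K_k\|\mathbf{F}\bar{\mathbf{h}}_k\|^2 + N_s$, we have $\chi_1^{(k)} = (\chi_3^{(k)})^2 + 2\chi_3^{(k)} - N_s \sim K_k^2\|\mathbf{F}\bar{\mathbf{h}}_k\|^4$, so
\begin{equation}
\tfrac{P_{avg}\beta_k}{K_k+1}\chi_1^{(k)} \;\sim\; P_{avg}\beta_k K_k\|\mathbf{F}\bar{\mathbf{h}}_k\|^4.
\end{equation}
For the denominator, the noise contribution $\chi_3^{(k)}\sim K_k\|\mathbf{F}\bar{\mathbf{h}}_k\|^2$ diverges, whereas each interference term $\tfrac{P_{avg}\beta_j}{K_j+1}\chi_{2,j}^{(k)}$ remains bounded (under the natural assumption that the $K_j$'s grow at comparable rates): the three pieces $\tfrac{P_{avg}\beta_j K_j}{K_j+1}\|\mathbf{F}\bar{\mathbf{h}}_j\|^2$, $\tfrac{P_{avg}\beta_j K_k}{K_j+1}\|\mathbf{F}\bar{\mathbf{h}}_k\|^2$, and $\tfrac{P_{avg}\beta_j N_s}{K_j+1}$ are all $O(1)$ in $K_k$. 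Hence the denominator is dominated by $\chi_3^{(k)}\sim K_k\|\mathbf{F}\bar{\mathbf{h}}_k\|^2$.

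Dividing numerator by denominator, the factor $K_k\|\mathbf{F}\bar{\mathbf{h}}_k\|^2$ cancels and the SINR tends to $P_{avg}\beta_k\|\mathbf{F}\bar{\mathbf{h}}_k\|^2$, which when plugged into \eqref{MRC UL rate appr} yields \eqref{MRC UL cor3} and hence \eqref{MRC UL cor3 approach}. The main obstacle is the one-line bookkeeping step: the term $\tfrac{P_{avg}\beta_j K_k}{K_j+1}\|\mathbf{F}\bar{\mathbf{h}}_k\|^2$ in the interference sum contains $K_k$ in the numerator, so one must be explicit that the $K$-factors diverge at comparable rates (as is standard in pure-LoS asymptotics) in order to guarantee it is dominated by $\chi_3^{(k)}$; otherwise this cross term could contribute a nontrivial correction. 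Once that is clarified, the rest is just identifying leading orders and taking the limit inside the $\log_2$, which is continuous.
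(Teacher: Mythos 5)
Your proposal is correct: the paper states this corollary without a written proof, and your derivation --- setting the cross term in $\chi_{2,j}^{(k)}$ to zero via orthogonality, then extracting the leading order $K_k^2\|\mathbf{F}\bar{\mathbf{h}}_k\|^4$ in $\chi_1^{(k)}$ and $K_k\|\mathbf{F}\bar{\mathbf{h}}_k\|^2$ in the denominator so that the SINR tends to $P_{avg}\beta_k\|\mathbf{F}\bar{\mathbf{h}}_k\|^2$ --- is exactly the intended limit computation from Theorem~\ref{UL MRC appr}. Your explicit caveat that the $K_j$ must diverge at comparable rates (so that $\tfrac{K_k}{K_j+1}$ stays bounded and the interference term is genuinely dominated by $\chi_3^{(k)}$) is a legitimate hypothesis the paper leaves implicit, and stating it only strengthens the argument.
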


From \eqref{MRC UL cor3}, we can obtain similar insights with \emph{Corollary \ref{ZF UL appr1 cor2}} that the more the power projected on the selected beams from the LoS paths, the higher the rate one can achieve in Ricean fading. Moreover, if the number of RF chains configured at the BS is increased, the projected power of LoS paths on the selected beams is enhanced as well, which further contributes to the improvement of the achievable rate.

\section{Downlink Rate Analysis}\label{Sec:Downlink Rate}
In this section, we focus on the downlink of the DFT-based hybrid beamforming system. Following the similar approach as in the uplink, we analyze the downlink rates when adopting the ZF and the MRT precoders. Considering the total transmit power constraint at the BS, power normalization is required in the downlink, which differs from the uplink analysis.

\subsection{ZF Precoder}
In the downlink, the ZF precoder is designed as
\begin{equation}\label{ZF precoder}
{\bf{W}}_D = {\bf{\bar W}}{\bf{P}},
\end{equation}
where
\begin{equation}\label{ZF DL matrix}
{\bf{\bar W}} = {{\bf{G}}_{eq}^{*}}{\left( {\bf{G}}_{eq}^T {{\bf{G}}_{eq}^{*}}  \right)^{ - 1}}
\end{equation}
is the zero-forcing matrix,
\begin{equation}\label{power allocation}
{\bf{P}} = {\rm diag}\left( {{\rho _1},{\rho _2},\dots,{\rho _{{N_u}}}} \right)
\end{equation}
is the power normalization matrix, and $\rho _k \in \mathbb{R}^{+}$ is the normalization coefficient for the $k$th data stream. Since ${\bf{G}}_{eq}^T{\bf{\bar W}} = {{\bf{I}}_{{N_u}}}$, the achievable rate of the ZF precoder is
\begin{equation}\label{ZF DL rate real}
{R^{\rm{ZF}}} = \sum\limits_{k = 1}^{{N_u}} \mathbb{E}{\left\{ {{{\log }_2}\left( {1 + P{\rho _k^2}} \right)} \right\}}.
\end{equation}

From \eqref{ZF DL rate real}, the rate of the ZF precoder can be seen to be solely determined by the power normalization coefficients. In this paper, we consider two power normalization methods. The first is referred to as long-term normalization where the matrix $\bf{P}$ is adjusted by the long-term channel state information (CSI) and holds for the coherent time of the channel. In particular, the normalization coefficients are expressed as
\begin{equation}\label{ZF long-term nomalization}
\rho  = {\rho _1} = {\rho _2} =  \cdots  = {\rho _{{N_u}}} = \frac{1} {\sqrt{\mathbb{E}\left\{ {\left\| {\bf{\bar W}} \right\|_F^2} \right\}} }.
\end{equation}

Based on the achievable rate expression of the ZF precoder and the definition of long-term normalization, we derive the following theorem to provide its approximation.

\begin{theorem}\label{DL ZF1 appr}
When adopting the ZF precoder and the long-term normalization in the downlink of the DFT-based hybrid beamforming system, the achievable rate is approximated to
\begin{equation}\label{ZF DL rate appr lt}
R_{\rm{App}}^{\rm{ZF1}} = {N_u} {\log _2}\left( {1 + \frac{{P\left( {{N_s} - {N_u}} \right)}}{{\sum\limits_{k = 1}^{{N_u}} {\beta _k^{ - 1}{{\left[ {{{{\bf{\hat \Sigma }}}^{ - 1}}} \right]}_{k,k}}} }}} \right),
\end{equation}
where ${\bf{\hat \Sigma }}$ is defined in \eqref{central Wishart covariance}.
\end{theorem}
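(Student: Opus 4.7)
The plan is to start from the exact rate \eqref{ZF DL rate real}. Because the long-term normalization coefficient $\rho$ defined in \eqref{ZF long-term nomalization} is deterministic (it depends only on second-order channel statistics), the expectation in \eqref{ZF DL rate real} is trivial and the rate collapses to $R^{\rm{ZF}} = N_u \log_2\!\left(1 + P/\mathbb{E}\{\|{\bf{\bar W}}\|_F^2\}\right)$. The whole theorem therefore reduces to approximating the one scalar quantity $\mathbb{E}\{\|{\bf{\bar W}}\|_F^2\}$, and matching it to the denominator $\sum_{k} \beta_k^{-1}[{\bf{\hat \Sigma }}^{-1}]_{k,k}$ divided by $(N_s - N_u)$.

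First I would simplify $\|{\bf{\bar W}}\|_F^2$. Using ${\bf{\bar W}} = {\bf{G}}_{eq}^{*}({\bf{G}}_{eq}^T{\bf{G}}_{eq}^{*})^{-1}$ and the fact that ${\bf{G}}_{eq}^T{\bf{G}}_{eq}^{*}$ is Hermitian positive definite (it is the conjugate of ${\bf{G}}_{eq}^H{\bf{G}}_{eq}$), a direct calculation gives $\|{\bf{\bar W}}\|_F^2 = \operatorname{tr}(({\bf{G}}_{eq}^T{\bf{G}}_{eq}^{*})^{-1}) = \operatorname{tr}(({\bf{G}}_{eq}^H{\bf{G}}_{eq})^{-1})$, with the last identity following because the trace of the conjugate of a Hermitian matrix equals the trace of the matrix itself. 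Then ${\bf{G}}_{eq}^H{\bf{G}}_{eq} = {\bf{D}}^{1/2}(FH)^H(FH){\bf{D}}^{1/2}$, so $\operatorname{tr}(({\bf{G}}_{eq}^H{\bf{G}}_{eq})^{-1}) = \sum_{k=1}^{N_u} \beta_k^{-1} [ ((FH)^H(FH))^{-1} ]_{k,k}$, which already produces the structure $\sum_k \beta_k^{-1}[\cdot]_{k,k}$ appearing in the target expression.

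Second, I would exploit the DFT structure of $\bf{F}$. Since $\bf{F}$ consists of $N_s$ orthonormal rows of the unitary DFT matrix, ${\bf{F}}{\bf{F}}^H = {\bf{I}}_{N_s}$, and by a direct covariance check ${\bf{F}}{\bf{H}}_w$ has i.i.d.\ $\mathcal{CN}(0,1)$ entries. Writing ${\bf{FH}} = {\bf{T}} + ({\bf{F}}{\bf{H}}_w)[({\bf{\Omega }}+{\bf{I}})^{-1}]^{1/2}$, the matrix ${\bf{FH}}$ has the law of a complex Gaussian matrix with mean ${\bf{T}}$, and one verifies that $\mathbb{E}\{(FH)^H(FH)\} = N_s{\bf{\hat \Sigma }}$ with ${\bf{\hat \Sigma }}$ defined in \eqref{central Wishart covariance}. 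This identifies $(FH)^H(FH)$ as a non-central Wishart-type matrix whose ``deterministic equivalent'' is $N_s{\bf{\hat \Sigma }}$.

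Third, I would invoke the central Wishart approximation: replace the non-central Wishart $(FH)^H(FH)$ by a central Wishart matrix with $N_s$ degrees of freedom and covariance ${\bf{\hat \Sigma }}$, for which the standard inverse-Wishart moment gives $\mathbb{E}\{((FH)^H(FH))^{-1}\} \approx {\bf{\hat \Sigma }}^{-1}/(N_s - N_u)$. Plugging this into the expression of the previous step yields $\mathbb{E}\{\|{\bf{\bar W}}\|_F^2\} \approx \sum_{k}\beta_k^{-1}[{\bf{\hat \Sigma }}^{-1}]_{k,k}/(N_s - N_u)$, which substituted into $N_u\log_2(1+P/\mathbb{E}\{\|{\bf{\bar W}}\|_F^2\})$ gives exactly \eqref{ZF DL rate appr lt}. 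The main obstacle is precisely this last step: justifying that the non-central Wishart inverse moment is well-approximated by its central-Wishart counterpart with the same mean $N_s{\bf{\hat \Sigma }}$. I expect the authors to either cite this as a standard tool used in earlier Ricean-fading analyses or to invoke the same moment-matching / deterministic-equivalent argument that presumably underlies the proof of Theorem~\ref{UL ZF appr1} in Appendix~\ref{Apdx:TheoULZF1}, so that the present proof can be written as a short recycling of that technique combined with the Frobenius-norm simplification above.
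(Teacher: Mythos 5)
Your proposal is correct and follows essentially the same route as the paper's proof: the same reduction to ${N_u}\log_2\left(1+P/\mathbb{E}\{\|{\bf{\bar W}}\|_F^2\}\right)$ via the deterministic $\rho$, the same trace identity producing $\sum_{k}\beta_k^{-1}\mathbb{E}\{[({\bf{H}}_{eq}^H{\bf{H}}_{eq})^{-1}]_{k,k}\}$, and the same central-Wishart (Steyn--Roux) replacement of the non-central Wishart matrix. The only cosmetic difference is that you invoke the matrix-level inverse-Wishart mean ${\bf{\hat\Sigma}}^{-1}/(N_s-N_u)$, whereas the paper obtains the identical diagonal entries from the marginal chi-squared law of $1/[{\bf{R}}_{eq}^{-1}]_{k,k}$ cited from Gore et al.
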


\begin{proof}
See Appendix \ref{Apdx:TheoDLZF1}.
\end{proof}

From \eqref{ZF DL rate appr lt}, it is found that for the ZF precoder, when the long-term normalization is adopted, all the users have equal received SINR and the rate of each user is same as well. It is not optimal because the user channels are different in quality.
Therefore, we further introduce the second method.

The second power normalization method is referred to as the short-term normalization where the power normalization coefficients are derived according to  the instantaneous channel information. In this method, we have
\begin{equation}\label{ZF short-term nomalization}
{\rho _k} = \frac{1}{{\sqrt {{N_u}} \left\| {\bf{\bar w}}_k \right\|}},
\end{equation}
which means that each data stream is allocated with equal transmit power. Moreover, it requires real-time CSI calculation and normalization factor adjustment.

\begin{theorem}\label{DL ZF2 appr}
When adopting the ZF precoder and the short-term normalization in the downlink of the DFT-based hybrid beamforming system, the achievable rate is approximated to
\begin{equation}\label{ZF DL rate appr st}
R_{\rm{App}}^{\rm{ZF2}} = \sum\limits_{k = 1}^{{N_u}} {{{\log }_2}\left( {1 + \frac{{P\left( {{N_s} - {N_u} + 1} \right)}}{{{N_u} {\beta _k^{-1}}{{\left[ {{{{\bf{\hat \Sigma }}}^{ - 1}}} \right]}_{k,k}}}}} \right)} .
\end{equation}
\end{theorem}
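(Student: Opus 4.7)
The plan is to adapt the argument used for Theorem \ref{DL ZF1 appr} by swapping the order in which the expectation and the reciprocal are taken, since short-term normalization makes $\rho_k$ depend on the instantaneous channel. Starting from \eqref{ZF DL rate real} and plugging \eqref{ZF short-term nomalization} into $P\rho_k^2 = P/(N_u\|\bar{\bf w}_k\|^2)$, the rate becomes
\begin{equation*}
R^{\rm ZF} = \sum_{k=1}^{N_u}\mathbb{E}\!\left\{\log_2\!\left(1+\frac{P}{N_u\,\|\bar{\bf w}_k\|^2}\right)\right\}.
\end{equation*}
My first step is to identify $\|\bar{\bf w}_k\|^2$ with a diagonal entry of the inverse effective-channel Gram. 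Using $\bar{\bf W}={\bf G}_{eq}^{*}({\bf G}_{eq}^{T}{\bf G}_{eq}^{*})^{-1}$ and the Hermitian symmetry of ${\bf G}_{eq}^{T}{\bf G}_{eq}^{*}$, a direct column-norm calculation yields $\|\bar{\bf w}_k\|^2=[({\bf G}_{eq}^{H}{\bf G}_{eq})^{-1}]_{k,k}$, and factoring ${\bf D}^{1/2}$ out of ${\bf G}_{eq}={\bf F}{\bf H}{\bf D}^{1/2}$ gives $\|\bar{\bf w}_k\|^2=\beta_k^{-1}[(({\bf F}{\bf H})^{H}({\bf F}{\bf H}))^{-1}]_{k,k}$.

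Next, because $\log_2(1+cX)$ is concave in $X\ge 0$, I would use the same Jensen-style first-order approximation $\mathbb{E}\{\log_2(1+cX)\}\approx \log_2(1+c\,\mathbb{E}\{X\})$ that underpins the earlier rate approximations, with $X=1/\|\bar{\bf w}_k\|^2$. This collapses the problem to evaluating $\mathbb{E}\{1/\|\bar{\bf w}_k\|^2\}$, i.e.\ the mean of the reciprocal of a diagonal entry of the inverse of a (non-central) Wishart-type matrix. I would then import the central-Wishart surrogate already justified in the proof of Theorem \ref{UL ZF appr1}: since $\mathbb{E}\{({\bf F}{\bf H})^{H}({\bf F}{\bf H})\}=N_s\hat{\bf \Sigma}$ with $\hat{\bf \Sigma}$ as in \eqref{central Wishart covariance}, the matrix $({\bf F}{\bf H})^{H}({\bf F}{\bf H})$ is approximated by $\mathcal{CW}_{N_u}(N_s,\hat{\bf \Sigma})$. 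Invoking the classical fact that for ${\bf A}\sim\mathcal{CW}_{p}(n,{\bf \Sigma})$ with $n\ge p$ one has $1/[{\bf A}^{-1}]_{k,k}$ gamma-distributed with mean $(n-p+1)/[{\bf \Sigma}^{-1}]_{k,k}$, I obtain $\mathbb{E}\{1/\|\bar{\bf w}_k\|^2\}\approx \beta_k(N_s-N_u+1)/[\hat{\bf \Sigma}^{-1}]_{k,k}$. Substituting into the Jensen step delivers exactly \eqref{ZF DL rate appr st}.

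The main obstacle is conceptual rather than computational: in Theorem \ref{DL ZF1 appr} one only needed $\mathbb{E}\{[({\bf F}{\bf H})^{H}({\bf F}{\bf H}))^{-1}]_{k,k}\}\approx [\hat{\bf \Sigma}^{-1}]_{k,k}/(N_s-N_u)$, whereas here the short-term normalization forces me to go one level deeper and handle the \emph{reciprocal} of that diagonal entry, which is why the factor changes from $N_s-N_u$ to $N_s-N_u+1$ (the shape parameter of the gamma distribution, as opposed to the inverse-gamma mean denominator). Once the two approximations are in place, the distinct appearance of the per-user sum in \eqref{ZF DL rate appr st}, as opposed to the single log in \eqref{ZF DL rate appr lt}, simply reflects that each user now has its own instantaneous normalization $\rho_k$ rather than a common long-term $\rho$.
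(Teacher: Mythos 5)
Your proposal is correct and follows essentially the same route as the paper: the paper likewise writes $\|\bar{\bf w}_k\|^2=\beta_k^{-1}[{\bf R}_{eq}^{-1}]_{k,k}$, pushes the expectation inside the logarithm, and uses the chi-squared (gamma) law of $\gamma_k=1/[{\bf R}_{eq}^{-1}]_{k,k}$ under the central-Wishart surrogate to get $\mathbb{E}\{\gamma_k\}=(N_s-N_u+1)/[\hat{\bf\Sigma}^{-1}]_{k,k}$. Your observation about why the factor changes from $N_s-N_u$ to $N_s-N_u+1$ (mean of the gamma variate versus mean of its reciprocal) is exactly the mechanism at work in the paper's Appendices E and F.
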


\begin{proof}
See Appendix \ref{Apdx:TheoDLZF2}.
\end{proof}

Let us now compare \eqref{ZF DL rate appr st} with \eqref{ZF DL rate appr lt}. Since the function $\log_2 \left( 1+ a x^{-1}\right)$, for $a>0$ is concave for $x>0$, when regarding ${\beta _k^{ - 1}} {{\left[ {{{{\bf{\hat \Sigma }}}^{ - 1}}} \right]}_{k,k}}$ as $x_k$, we can derive that $R_{\rm App}^{\rm ZF1} \le R_{\rm App}^{\rm ZF2}$ according to the Jensen's inequality. It validates our previous thoughts that short-term normalization performs better than long-term normalization for the ZF precoder.

Furthermore, following the example of the uplink, we give the asymptotic expressions of the achievable rates for the ZF precoder in the pure Ricean fading environments.

\begin{corollary}\label{ZF DL appr cor2}
When $K_k \to \infty$ for $k = 1,\dots,N_u$, if orthogonality holds among the effective LoS components from different users, \eqref{ZF DL rate appr lt} and \eqref{ZF DL rate appr st} approach, respectively, to
\begin{equation}\label{ZF DL cor2 approach}
R_{{\rm{App}}}^{\rm{ZF1}} \to {N_u} R_{{\rm{App}},k}^{\rm{ZF1}}, \quad R_{{\rm{App}}}^{\rm{ZF2}} \to \sum\limits_{k = 1}^{{N_u}}{R_{{\rm{App}},k}^{\rm{ZF2}}},
\end{equation}
where
\begin{equation}\label{ZF DL cor2 lt}
R_{{\rm{App}},k}^{\rm{ZF1}}= {\log _2}\left( {1 + \frac{{P\left( {{N_s} - {N_u}} \right)}}{{N_s}{\sum\limits_{i = 1}^{{N_u}} {\beta _i^{ - 1} {\left\| {{\bf{F}}{\bf{\bar h}}_i} \right\|}^{-2}} }}} \right)
\end{equation}
and
\begin{equation}\label{ZF DL cor2 st}
R_{{\rm{App}},k}^{\rm{ZF2}}= {{{\log }_2}\left( {1 + \frac{{P\left( {{N_s} - {N_u} + 1} \right)}}{{{N_s}{N_u}{\beta _k^{-1}}{\left\| {{\bf{F}}{\bf{\bar h}}_k} \right\|}^{-2}}}} \right)}.
\end{equation}
\end{corollary}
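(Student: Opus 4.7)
The plan is to push the high$K$ limit through the closed-form matrix $\hat{\bm\Sigma}$ that appears in both Theorem 3 and Theorem 4, and then read off the two asymptotic rates from \eqref{ZF DL rate appr lt} and \eqref{ZF DL rate appr st}. All the work is concentrated in evaluating the diagonal entries $\bigl[\hat{\bm\Sigma}^{-1}\bigr]_{k,k}$ in the limit; once those are known, the two formulas follow by direct substitution.

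First I would examine the building blocks of $\hat{\bm\Sigma} = (\bm\Omega+\mathbf{I}_{N_u})^{-1} + \tfrac{1}{N_s}\mathbf{T}^H\mathbf{T}$ with $\mathbf{T} = \mathbf{F}\bar{\mathbf{H}}[\bm\Omega(\bm\Omega+\mathbf{I}_{N_u})^{-1}]^{1/2}$. As $K_k\to\infty$ for every $k$, the diagonal matrix $(\bm\Omega+\mathbf{I}_{N_u})^{-1}$ tends to the zero matrix entrywise, while $\bm\Omega(\bm\Omega+\mathbf{I}_{N_u})^{-1}$ tends to $\mathbf{I}_{N_u}$. Hence $\mathbf{T}\to\mathbf{F}\bar{\mathbf{H}}$ and
\begin{equation*}
\hat{\bm\Sigma} \;\longrightarrow\; \tfrac{1}{N_s}\,\bar{\mathbf{H}}^H\mathbf{F}^H\mathbf{F}\bar{\mathbf{H}}.
\end{equation*}

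Next I would invoke the orthogonality hypothesis $\bar{\mathbf{h}}_j^H\mathbf{F}^H\mathbf{F}\bar{\mathbf{h}}_k=0$ for $j\neq k$, which makes $\bar{\mathbf{H}}^H\mathbf{F}^H\mathbf{F}\bar{\mathbf{H}}$ diagonal with $k$th entry $\|\mathbf{F}\bar{\mathbf{h}}_k\|^2$. The inverse is therefore also diagonal and
\begin{equation*}
\bigl[\hat{\bm\Sigma}^{-1}\bigr]_{k,k}\;\longrightarrow\;\frac{N_s}{\|\mathbf{F}\bar{\mathbf{h}}_k\|^2}.
\end{equation*}
Substituting this into \eqref{ZF DL rate appr lt} immediately converts the denominator $\sum_k \beta_k^{-1}\bigl[\hat{\bm\Sigma}^{-1}\bigr]_{k,k}$ into $N_s\sum_{i=1}^{N_u}\beta_i^{-1}\|\mathbf{F}\bar{\mathbf{h}}_i\|^{-2}$, yielding \eqref{ZF DL cor2 lt}; similarly, substitution into the per-user sum \eqref{ZF DL rate appr st} produces \eqref{ZF DL cor2 st}.

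The only step that needs care, and which I view as the main (albeit mild) obstacle, is justifying that passing to the limit inside $\hat{\bm\Sigma}^{-1}$ is legitimate: one must ensure that $\bar{\mathbf{H}}^H\mathbf{F}^H\mathbf{F}\bar{\mathbf{H}}$ is invertible in the limit so that the diagonal-inversion argument is not singular. Under the orthogonality assumption this reduces to requiring $\|\mathbf{F}\bar{\mathbf{h}}_k\|^2>0$ for every $k$, which is the natural nondegeneracy condition (each user's LoS direction must have nonzero projection on the selected DFT beams); assuming this, continuity of matrix inversion on the set of invertible matrices delivers the limit, and the proof is complete.
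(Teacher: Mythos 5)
Your proposal is correct and follows essentially the same route as the paper: the paper's own proof simply cites the fact that $\bigl[\hat{\bm\Sigma}^{-1}\bigr]_{k,k}=\alpha_k^{-1}=N_s/\|\mathbf{F}\bar{\mathbf{h}}_k\|^2$ in the limit, which is exactly the diagonalization of $\hat{\bm\Sigma}\to\tfrac{1}{N_s}\bar{\mathbf{H}}^H\mathbf{F}^H\mathbf{F}\bar{\mathbf{H}}$ under the orthogonality hypothesis that you derive. Your added remark about the nondegeneracy condition $\|\mathbf{F}\bar{\mathbf{h}}_k\|^2>0$ is a reasonable (implicit in the paper) refinement, not a different argument.
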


\begin{proof}
The proof is based on the fact that ${\left[ {{{{\bf{\hat \Sigma }}}^{ - 1}}} \right]}_{k,k} = \alpha _k^{-1} = \frac{N_s}{\left\| {{\bf{F\bar h}}_k} \right\|^2}$.
\end{proof}

\subsection{MRT Precoder}
With the MRT precoder in the downlink, ${\bf{W}}_D$ is found as
\begin{equation}\label{MRT precoder}
{{\bf{W}}_D} = {\bf{G}}_{eq}^{*}{\bf{P}},
\end{equation}
where ${\bf{P}}$ is previously defined as the power normalization matrix. The achievable rate is calculated as
\begin{equation}\label{MRT DL rate real}
{R^{\rm MRT}} = \sum\limits_{k = 1}^{{N_u}} \mathbb{E}{\left\{ {{{\log }_2}\left( {1 + \frac{{P \rho _k^2{{\left\| {{{\bf{g}}_{eq,k}}} \right\|}^4}}}{{P \sum\limits_{j \ne k} {\rho _j^2{{\left| {{{\bf{g}}_{eq,k}^H}{\bf{g}}_{eq,j}} \right|}^2}}  + {\rm{1}}}}} \right)} \right\}}.
\end{equation}
We first derive the achievable rate approximation for the long-term normalization.

\begin{theorem}\label{DL MRT1 appr}
Using the MRT precoder and the long-term normalization in the downlink, the achievable rate is approximated to
\begin{equation}\label{MRT DL rate appr lt}
\begin{aligned}
&R_{\rm{App}}^{\rm{MRT1}} = \\
&\sum\limits_{k = 1}^{{N_u}} {{\log }_2}\left( {1 + \frac{{\frac{{P\beta _k^2}}{{{{\left( {{K_k} + 1} \right)}^2}}}{\chi _1^{(k)}}}} {{\sum\limits_{j \ne k} \frac{{P{\beta _k}{\beta _j}}}{{\left( {{K_k} + 1} \right)\left( {{K_j} + 1} \right)}}{{\chi _{2,j}^{(k)}}} + \sum\limits_{i = 1}^{{N_u}}{{\frac{{{\beta _i}}}{{{K_i} + 1}}{\chi _3^{(i)}}}}}}} \right),
\end{aligned}
\end{equation}
where ${\chi _1^{(k)}}$, ${\chi _2^{(k)}}$ and ${\chi _3^{(k)}}$ are defined in \eqref{MRC UL rate appr x1}, \eqref{MRC UL rate appr x2} and \eqref{MRC UL rate appr x3}, respectively.
\end{theorem}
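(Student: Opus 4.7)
The plan is to mirror the MRC receiver analysis of Theorem~\ref{UL MRC appr}, exploiting the fact that the per-user SINR appearing inside \eqref{MRT DL rate real} is formally identical to the uplink MRC SINR of \eqref{MRC UL rate gammak} up to the deterministic power-normalization factors $\rho_k$ and the replacement of the uplink noise term $\|{\bf g}_{eq,k}\|^2$ by $1$. First I would invoke the standard deterministic-equivalent approximation $\mathbb{E}\{\log_2(1+X/Y)\}\approx \log_2(1+\mathbb{E}\{X\}/\mathbb{E}\{Y\})$ already used throughout the uplink analysis, applied term by term to each summand of \eqref{MRT DL rate real}. This reduces the problem to evaluating $\mathbb{E}\{\rho_k^2\|{\bf g}_{eq,k}\|^4\}$ in the numerator and $\mathbb{E}\{\rho_j^2|{\bf g}_{eq,k}^H {\bf g}_{eq,j}|^2\}$ in the interference part of the denominator.

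Next I would pin down the long-term normalization constant by analogy with \eqref{ZF long-term nomalization}. For the MRT precoder this means setting $\rho_1=\cdots=\rho_{N_u}=\rho$ deterministically with $\rho^2 = 1/\mathbb{E}\{\|{\bf W}_D\|_F^2\} = 1/\mathbb{E}\{\sum_{i=1}^{N_u}\|{\bf g}_{eq,i}\|^2\}$. Because $\rho$ is channel-independent, it factors out of every expectation. Writing ${\bf g}_{eq,i}={\bf F}{\bf g}_i$ and decomposing ${\bf g}_i$ into its LoS and Rayleigh parts via \eqref{multiuser channel H}, then using the identity ${\bf F}{\bf F}^H={\bf I}_{N_s}$ (which holds because ${\bf F}$ consists of rows of the unitary DFT matrix ${\bf U}$), a direct second-moment computation yields $\mathbb{E}\{\|{\bf g}_{eq,i}\|^2\}=(\beta_i/(K_i+1))\chi_3^{(i)}$ with $\chi_3^{(i)}$ as in \eqref{MRC UL rate appr x3}. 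Hence $1/\rho^2 = \sum_{i=1}^{N_u}(\beta_i/(K_i+1))\chi_3^{(i)}$, which is precisely the second summation appearing in the denominator of \eqref{MRT DL rate appr lt}.

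The remaining fourth-order moments are exactly the ones evaluated in the proof of Theorem~\ref{UL MRC appr}, so I would reuse them rather than recompute: namely, $\mathbb{E}\{\|{\bf g}_{eq,k}\|^4\} = (\beta_k^2/(K_k+1)^2)\chi_1^{(k)}$ and $\mathbb{E}\{|{\bf g}_{eq,k}^H {\bf g}_{eq,j}|^2\} = (\beta_k\beta_j/((K_k+1)(K_j+1)))\chi_{2,j}^{(k)}$, with $\chi_1^{(k)}$ and $\chi_{2,j}^{(k)}$ as in \eqref{MRC UL rate appr x1} and \eqref{MRC UL rate appr x2}. Substituting these into the Jensen-type SINR approximation and dividing both the numerator and the denominator by $\rho^2$ (so that the constant noise term $1$ in the denominator becomes $1/\rho^2$ and is identified with $\sum_i(\beta_i/(K_i+1))\chi_3^{(i)}$) delivers \eqref{MRT DL rate appr lt} after a line of algebra.

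The conceptually hard part is the evaluation of the fourth-order Ricean moments, especially the cross term $\mathbb{E}\{|{\bf g}_{eq,k}^H {\bf g}_{eq,j}|^2\}$, which mixes deterministic LoS and complex Gaussian contributions and ultimately produces the LoS-overlap quadratic form $|\bar{\bf h}_j^H {\bf F}^H {\bf F}\bar{\bf h}_k|^2$ embedded in $\chi_{2,j}^{(k)}$; one must split each ${\bf g}_{eq,i}$ into its LoS and Rayleigh components, expand the quartic product, and carefully track which cross terms vanish by the zero-mean i.i.d.\ structure of ${\bf H}_w$ and which survive to contribute the $\chi_3^{(j)}+\chi_3^{(k)}-N_s$ piece. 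However, since exactly this bookkeeping has already been carried out in the proof of Theorem~\ref{UL MRC appr} (Appendix~\ref{Apdx:TheoULMRC}), the present proof essentially reduces to (i) applying Jensen's approximation, (ii) computing the single genuinely new ingredient $\mathbb{E}\{\sum_i\|{\bf g}_{eq,i}\|^2\}$ to fix $\rho$, and (iii) substituting the previously established second- and fourth-order moments.
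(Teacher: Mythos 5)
Your proposal is correct and follows essentially the same route as the paper: apply the expectation-ratio (Jensen/deterministic-equivalent) approximation of Lemma~1 in \cite{Zhang2014} to \eqref{MRT DL rate real}, reuse the second- and fourth-order Ricean moments already computed in Appendix~\ref{Apdx:TheoULMRC}, and fix the long-term constant via $\rho^{2}=1/\mathbb{E}\{\|{\bf G}_{eq}\|_F^2\}=\bigl(\sum_{i}\tfrac{\beta_i}{K_i+1}\chi_3^{(i)}\bigr)^{-1}$ before dividing through by $\rho^{2}$. The only cosmetic discrepancy is that you write the normalization as $1/\mathbb{E}\{\|{\bf W}_D\|_F^2\}$ when you mean the un-normalized MRT matrix ${\bf G}_{eq}^{*}$, but your subsequent computation is exactly the paper's.
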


\begin{proof}
See Appendix \ref{Apdx:TheoDLMRT1}.
\end{proof}

Similarly, for the short-term normalization, the following theorem provides the approximation of the achievable rate.

\begin{theorem}\label{DL MRT2 appr}
When employing the MRT precoder and the short-term normalization in the downlink, the achievable rate is approximated as
\begin{equation}\label{MRT DL rate appr st}
R_{\rm{App}}^{\rm{MRT2}} = \sum\limits_{k = 1}^{{N_u}} {{\log }_2}\left( {1 + \frac{{\frac{{P{\beta _k}}}{N_u\left({{K_k} + 1}\right)}\chi _3^{(k)}}}{{\sum\limits_{j \ne k} {\frac{{P{\beta _k}}}{N_u\left({{K_k} + 1}\right)} \frac{\chi _{2,j}^{(k)}}{{\chi _3^{(j)}}}}}  + 1}} \right).
\end{equation}
\end{theorem}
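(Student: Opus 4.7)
The plan is to mimic the proof of Theorem \ref{DL MRT1 appr} but with the long-term factor replaced by the short-term choice $\rho_k^2 = 1/(N_u\|{\bf{g}}_{eq,k}\|^2)$. Substituting this into the SINR inside \eqref{MRT DL rate real} cancels one $\|{\bf{g}}_{eq,k}\|^2$ on the signal side and turns every $\rho_j^2$ on the interference side into $1/(N_u\|{\bf{g}}_{eq,j}\|^2)$, so that the $k$-th summand collapses to
\begin{equation*}
\mathbb{E}\left\{\log_2\left(1+\frac{\frac{P}{N_u}\|{\bf{g}}_{eq,k}\|^2}{\frac{P}{N_u}\sum_{j\ne k}\frac{|{\bf{g}}_{eq,k}^H{\bf{g}}_{eq,j}|^2}{\|{\bf{g}}_{eq,j}\|^2}+1}\right)\right\}.
\end{equation*}
I would then reuse the ``pull the expectation inside the $\log$ via $\mathbb{E}\{X\}/\mathbb{E}\{Y\}$'' approximation that was employed in the proofs of Theorems \ref{UL MRC appr} and \ref{DL MRT1 appr}.

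For the numerator, $\mathbb{E}\{\|{\bf{g}}_{eq,k}\|^2\}$ is already computed inside the proof of Theorem \ref{UL MRC appr}: writing ${\bf{g}}_{eq,k}=\sqrt{\beta_k}{\bf{F}}{\bf{h}}_k$ and using the Ricean mean/covariance of ${\bf{h}}_k$ together with ${\rm tr}({\bf{F}}^H{\bf{F}})=N_s$ gives $\beta_k\chi_3^{(k)}/(K_k+1)$, and multiplying by $P/N_u$ produces exactly the numerator in \eqref{MRT DL rate appr st}. For the interference I would additionally approximate each $\mathbb{E}\{|{\bf{g}}_{eq,k}^H{\bf{g}}_{eq,j}|^2/\|{\bf{g}}_{eq,j}\|^2\}$ by the ratio of the individual expectations: the numerator of this ratio is $\beta_k\beta_j\chi_{2,j}^{(k)}/[(K_k+1)(K_j+1)]$, again from the MRC proof, and the denominator is $\beta_j\chi_3^{(j)}/(K_j+1)$. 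The $\beta_j$ and $K_j+1$ factors cancel, and after multiplication by $P/N_u$ I obtain exactly the interference terms in \eqref{MRT DL rate appr st}, with the surviving $+1$ accounting for the normalised noise.

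The main obstacle I expect is the ratio-of-expectations step, because ${\bf{g}}_{eq,j}$ sits in both the numerator and denominator of the interference summand and the two are clearly correlated. The usual informal justification is that $\|{\bf{g}}_{eq,j}\|^2$ concentrates around its mean once $N_s$ is moderately large, so that the incurred error is small; a rigorous bound would require a delta-method-style expansion. Since the same family of approximations already underlies Theorems \ref{UL MRC appr} and \ref{DL MRT1 appr}, the final validation of the closed form is most naturally deferred to the numerical comparisons in Section \ref{Sec:Numerical Results}.
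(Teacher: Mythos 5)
Your proposal is correct and follows essentially the same route as the paper: substitute the short-term coefficients into the SINR, invoke the $\mathbb{E}\{X\}/\mathbb{E}\{Y\}$ approximation (the paper cites Lemma 1 of \cite{Zhang2014} for this, and also pushes the expectation into the nested interference ratio exactly as you do), and then reuse the moment expressions from the MRC proof so that the $\beta_j$ and $K_j+1$ factors cancel. The concern you raise about the correlated numerator and denominator in the interference term is real but is handled by the paper with the same heuristic, so there is no substantive difference.
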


\begin{proof}
See Appendix \ref{Apdx:TheoDLMRT2}.
\end{proof}

To compare the performance of the long-term and the short-term normalization for the MRT precoder, we derive the rate limits in pure Ricean fading conditions below.

\begin{corollary}\label{MRT DL appr cor2}
When $K_k \to \infty$ for $k = 1,\dots,N_u$, if orthogonality holds among the effective LoS components of different users, \eqref{MRT DL rate appr lt} and \eqref{MRT DL rate appr st} are approaching to, respectively,
\begin{equation}\label{MRT DL cor2 approach}
R_{{\rm{App}}}^{\rm{MRT1}} \to \sum\limits_{k = 1}^{{N_u}}{R_{{\rm{App}},k}^{\rm{MRT1}}},\quad R_{{\rm{App}}}^{\rm{MRT2}} \to \sum\limits_{k = 1}^{{N_u}}{R_{{\rm{App}},k}^{\rm{MRT2}}},
\end{equation}
where
\begin{equation}\label{MRT UL cor2 lt}
R_{{\rm{App}},k}^{\rm{MRT1}} = {{{\log }_2}\left( 1+ \frac{P \beta_k^2 {\left\| {{\bf{F\bar h}}_k} \right\|^4}}{\sum\limits_{i = 1}^{{N_u}} \beta_i {\left\| {{\bf{F\bar h}}_i} \right\|^2}}  \right)}
\end{equation}
and
\begin{equation}\label{MRT UL cor2 st}
R_{{\rm{App}},k}^{\rm{MRT2}} = {{{\log }_2}\left( 1+ \frac{P \beta_k {\left\| {{\bf{F\bar h}}_k} \right\|^2}}{N_u}  \right)}.
\end{equation}
\end{corollary}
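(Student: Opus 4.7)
The plan is to substitute the orthogonality condition $\bar{\bf h}_j^H {\bf F}^H {\bf F} \bar{\bf h}_k = 0$ for $j \ne k$ into the definitions \eqref{MRC UL rate appr x3}, \eqref{MRC UL rate appr x1}, \eqref{MRC UL rate appr x2}, and then identify the dominant asymptotic behaviour of each quantity as $K_k \to \infty$. All three $\chi$'s scale as powers of $K_k$, so the limits reduce to book-keeping the leading orders.

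First I would note the asymptotics $\chi_3^{(k)} = K_k \|{\bf F}\bar{\bf h}_k\|^2 + N_s \sim K_k \|{\bf F}\bar{\bf h}_k\|^2$ and, to leading order, $\chi_1^{(k)} \sim (\chi_3^{(k)})^2 \sim K_k^2 \|{\bf F}\bar{\bf h}_k\|^4$. Under orthogonality the cross term in \eqref{MRC UL rate appr x2} vanishes, leaving $\chi_{2,j}^{(k)} = \chi_3^{(j)} + \chi_3^{(k)} - N_s \sim K_j \|{\bf F}\bar{\bf h}_j\|^2 + K_k \|{\bf F}\bar{\bf h}_k\|^2$. These three estimates are all that is required.

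For the long-term case \eqref{MRT DL rate appr lt}, the signal term behaves as $\frac{P\beta_k^2}{(K_k+1)^2}\chi_1^{(k)} \to P\beta_k^2 \|{\bf F}\bar{\bf h}_k\|^4$, while each interference term in the first sum of the denominator scales like $\frac{1}{(K_k+1)(K_j+1)} \cdot O(K_k + K_j) = O(K^{-1})$ and therefore vanishes. The noise-like second sum reduces to $\sum_{i=1}^{N_u} \frac{\beta_i}{K_i+1} \chi_3^{(i)} \to \sum_{i=1}^{N_u} \beta_i \|{\bf F}\bar{\bf h}_i\|^2$, which yields \eqref{MRT UL cor2 lt}. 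For the short-term case \eqref{MRT DL rate appr st}, the signal term becomes $\frac{P\beta_k}{N_u(K_k+1)}\chi_3^{(k)} \to \frac{P\beta_k}{N_u} \|{\bf F}\bar{\bf h}_k\|^2$; the ratio $\chi_{2,j}^{(k)}/\chi_3^{(j)}$ stays bounded (since both numerator and denominator are linear in the $K$'s), so multiplication by the $(K_k+1)^{-1}$ prefactor drives the whole interference sum to zero, leaving the $+1$ as the only surviving denominator term and producing \eqref{MRT UL cor2 st}.

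There is no real obstacle here beyond careful tracking of orders: the only subtlety is verifying that $\chi_{2,j}^{(k)}/\chi_3^{(j)}$ remains bounded (rather than diverging) when $K_k$ and $K_j$ grow at possibly different rates, which is immediate because both numerator and denominator are $O(\max(K_k,K_j))$ under orthogonality. Once that is settled, the two limits follow by direct substitution into the logarithms of \eqref{MRT DL rate appr lt} and \eqref{MRT DL rate appr st}.
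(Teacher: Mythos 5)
Your argument is correct and follows the same direct leading-order substitution that the paper leaves implicit --- it states this corollary without a separate proof, exactly as it does for the analogous MRC limit in Corollary \ref{MRC UL appr cor3}. One small imprecision: under orthogonality $\chi_{2,j}^{(k)}/\chi_3^{(j)} = 1 + K_k\|{\bf F}\bar{\bf h}_k\|^2/\bigl(K_j\|{\bf F}\bar{\bf h}_j\|^2+N_s\bigr)$ is \emph{not} bounded when $K_k$ outpaces $K_j$ (the denominator is only $O(K_j)$, not $O(\max(K_k,K_j))$), but your conclusion survives because after multiplying by the $(K_k+1)^{-1}$ prefactor the offending piece behaves as $\|{\bf F}\bar{\bf h}_k\|^2/\bigl(K_j\|{\bf F}\bar{\bf h}_j\|^2+N_s\bigr)\to 0$ in any case.
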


For either $K_k = 0$ or $K_k \to \infty$ cases, if the short-term normalization is employed with the MRT precoder, users with lower $\beta_k$ can obtain relatively higher received SINR when compared with the long-term normalization. On the contrary, for the long-term normalization, users with higher $\beta_k$ will be allocated with more power on their data streams. Thus, we can conclude that for MRT precoders, the long-term normalization is preferred for the users who have stronger channel quality, and that the short-term normalization improves the quality of users with poor propagation conditions. The situation will be different in the case of the ZF precoder.

\section{Analog Beam Selection}\label{Sec:Beam Selection}

The asymptotic expressions and the analytical results above help us develop new beam selection solutions for the DFT-based hybrid multiuser system. Here, we study how to design the analog beamformer by utilizing these analytical results and present three beam selection schemes to pursue high achievable rate, which are realized by the designs of ${\bf{\Psi }}$.

\subsection{Achievable Rate Based Exhaustive Searching}
One can aim to achieve the highest rate for the multiuser system based on the previously derived approximations. According to the directions of transmission link, the type of precoders or receivers and the normalization method, we can use the appropriate approximation to help select the DFT beams. This can be done by searching over all the possible beam combinations to identify the joint optimum beams for all the RF chains. We refer to this scheme as the achievable rate based exhaustive searching scheme.

\begin{table}
  \label{tab:exhausted searching algorithm}
  \centering
  \begin{tabular}{l}
    \hline
    \bfseries Algorithm 1 Exhaustive Searching \\
    \hline
    \bfseries Require: ${\bf{\Psi }}$ \\
    1: set $R_{\rm max} = 0$ \\
    2: for $i_1,\dots,i_{N_s} \le M$ do \\
    3: ~ ${\bf{\Psi }}_{tmp} = \left[{\bf e}_{i_1},\dots,{\bf e}_{i_{N_s}}\right]$ \\
    4: ~ calculate $R \left( {\bf{\Psi }}_{tmp}\right)$ using the corresponding\\
    \qquad ~ approximation expression\\
    5: ~ if $R \ge R_{\rm max}$ \\
    6: ~~~ ${\bf{\Psi }} = {\bf{\Psi }}_{tmp}$ \\
    7: ~ end if \\
    8: end for \\
    \bfseries return ${\bf{\Psi }}$ \\
    \hline
  \end{tabular}
\end{table}

As mentioned above, each of the ${N_s}$ RF chains will be assigned with a DFT beam selected from the $M$-sized codebook. There are totally $M^{{N_s}}$ beam combinations. Assume that $\bf D$, $\bf \Omega$ and $\bar{\bf H}$ are known at the BS. Let us take the uplink ZF receiver as an example. After calculating the achievable rate of each beam combination, we get the optimum combination that maximizes \eqref{ZF UL rate appr1} by solving
\begin{subequations}
\begin{equation}\label{exhausted searching goal}
\mathop {\max} \limits_{{\bf{F}}} {R_{{\rm{App}}}^{\rm{ZF}}}
\end{equation}
\begin{equation}\label{exhausted searching st1}
{\rm s.t.}\quad{\bf F} = {\bf\Psi}{\bf U}, {\bf{\Psi}} = \left[ {\bf{e}}_{i_1}, {\bf{e}}_{i_2},\dots, {\bf{e}}_{i_{N_s}} \right]^T,
\end{equation}
\begin{equation}\label{exhausted searching st2}
i_1,i_2,\dots,i_{N_s} = 1,2,\dots,M.
\end{equation}
\end{subequations}
The approach is formulated as Algorithm 1.

The achievable rate based exhaustive searching scheme uses the derived theorems and strives for the optimization of global achievable rate performance. Therefore, it obtains the optimum beam selection results and achieves the highest ergodic rate. However, it is incredibly time-consuming when either $M$ or ${N_s}$ grows large, which can become infeasible quickly.

\subsection{Projected Power Based Per-user Selection}
Considering the drawback of exhaustive searching, there is need to explore other suboptimal schemes which are practically more feasible. The first thing is to abandon exhaustive searching which requires many power-level comparisons.
To do so, we refer to \emph{Corollaries \ref{ZF UL appr1 cor2}, \ref{MRC UL appr cor3}, \ref{ZF DL appr cor2}} and \emph{\ref{MRT DL appr cor2}} for simpler expressions of the achievable rate.

\begin{table}
  \label{tab:per-user selection algorithm}
  \centering
  \begin{tabular}{l}
    \hline
    \bfseries Algorithm 2 Per-user Selection \\
    \hline
    \bfseries Require: ${\bf{\Psi }}$ \\
    1: ${\bf{\Psi }}=$ empty matrix \\
    2: for $k \le {N_u}$ do \\
    3: ~ $\left[ {j_1^{(k)},\dots,j_C^{(k)}} \right] = \mathop {\arg \max }\limits_{n = 1,\dots,M} \left( {{{\bf{U}}^H}{\bf{\bar h}}_k}{\bf{\bar h}}_k^H{\bf{U}} \right)_{k,k}$ \\
    4: ~ for $C\left( {k - 1} \right) + 1 \le m \le \min \left( {Ck,{N_s}} \right)$ do\\
    5: ~~~ ${i_m} = j_{m - C\left( {k - 1} \right)}^{(k)}$ \\
    6: ~~~ ${\bf{\Psi }}\left( {:,m} \right) = {{\bf{e}}_{{i_m}}}$ \\
    7: ~ end for \\
    8: end for \\
    \bfseries return ${\bf{\Psi }}$ \\
    \hline
  \end{tabular}
\end{table}

To do so, according to \emph{Corollaries \ref{ZF UL appr1 cor2}, \ref{MRC UL appr cor3}, \ref{ZF DL appr cor2}} and \emph{\ref{MRT DL appr cor2}}, the  rate limits have much simpler expressions for analysis. Take \emph{Corollary \ref{ZF UL appr1 cor2}} as an example.
We find that for user $k$,
\begin{equation}\label{per-user selection goal1}
\mathop {\max }\limits_{\bf F}{R_{{\rm{App}},k}^{\rm{ZF1}}}, \quad {\rm s.t.}\quad \eqref{exhausted searching st1}\quad {\rm and} \quad \eqref{exhausted searching st2},
\end{equation}
can be recast into
\begin{equation}\label{per-user selection goal2}
\mathop {\max }\limits_{\bf F}{{\left\| {{\bf{F}}{\bf{\bar h}}_k} \right\|}^2}, \quad{\rm s.t.} \quad \eqref{exhausted searching st1} \quad{\rm and}\quad \eqref{exhausted searching st2}
\end{equation}
when ${N_s},{N_u}$ and ${\bf{\bar h}}_k$ are fixed.
It is because that the achievable rate is improved with the increase of ${\left\| {{\bf{F}}{\bf{\bar h}}_k} \right\|}^2$ under strong Ricean fading conditions, as is mentioned in the insights from \emph{Corollary \ref{ZF UL appr1 cor2}}. ${\left\| {{\bf{F}}{\bf{\bar h}}_k} \right\|}^2$ represents the projected power of the LoS component of user $k$ on the selected DFT beams. The enhancement of ${\left\| {{\bf{F}}{\bf{\bar h}}_k} \right\|}^2$ reflects that the selected beams are more competent to capture the main lobes of the LoS paths. Moreover, \eqref{per-user selection goal2} can be achieved without power-level comparisons, which is more time-saving.

Based on the analysis, we first introduce a per-user selection scheme to maximize the projected power of the LoS paths. For fairness, we strive to make each user use an equal number of RF chains. Denote $C = \left\lfloor {{{{N_s}} \mathord{\left/
{\vphantom {{{N_s}} {{N_u}}}} \right. \kern-\nulldelimiterspace} {{N_u}}}} \right\rfloor $. Then each of the first ${N_u} - 1$ users is allocated with $C$ RF chains, and the ${N_u}$th user is allocated with ${N_s} - C\left( {{N_u} - 1} \right)$ RF chains. For user $k$, we use ${\bf{\bar h}}_k$ to select beams. The steps for this beam selection approach are presented in Algorithm 2.

This projected power based per-user selection scheme reduces the number of comparisons from $M^{{N_s}}$ to $M \times N_u$. However, the orthogonality among the effective LoS paths from different users is not promised in this scheme, which will significantly impact the achievable rate performance.

\subsection{The Proposed Two-Step Selection Scheme}
Both the exhaustive searching method and the projected power based per-user selection have advantages and disadvantages. The exhaustive searching scheme insures the optimality of the selected beam combinations, including the capture of the LoS paths and the orthogonality among the effective LoS paths from different users. For the per-user selection scheme, it uses a simple metric, i.e., maximizing the projective power of the LoS paths, to implement beam selection. We can jointly utilize these two schemes to devise new and better solutions.

For the per-user selection scheme, beams are directly determined after comparing the projected power. By doing so, however, it is unclear if there exists severe interference among the effective LoS paths from different users. As a result, there needs to be a fix in the per-user selection. Since the CSI is known at the BS, we can evaluate any beam combination for any number of RF chains using the analytically derived approximations. The fact that the increase of the number of RF chains can enhance the performance of the hybrid beamforming system, suggests that in the per-user selection scheme, we should consider more than $N_s$ RF chains to choose more than $N_s$ beams. Then the extra beams can later be dropped using the approximations. This is the rationale of the proposed two-step selection scheme as Algorithm 3.

As seen, the scheme chooses more beams than needed in Step 1. There are $C+n$ beams selected for each user, where $n$ is a positive integer that directly determines the number of extra beams. Totally $N_u \left( C+n\right)$ beams are selected, and a big ${\bf{\Psi }}$ is derived in Step 1. Then in Step 2, the approximations in {\em Theorems} are adopted to help remove $N_u \left( C+n\right) - N_s$ extra beams one by one. It requires $N_u \left( C+n\right) - N_s$ rounds and one beam is removed at the end of each round. At the beginning of the $i$th round, there remain $N_u \left( C+n\right) -i+1$ beams. We calculate the rate when each of these beams is removed, i.e., $R_i\left( j\right) $ for the $i$th beam removed. Then, find the one that has the least performance reduction and drop it from ${\bf{\Psi }}$. After $N_u \left( C+n\right) - N_s$ rounds, we obtain the desired ${\bf{\Psi }}$.

\begin{table}
  \label{tab:two-step selection algorithm}
  \centering
  \begin{tabular}{l}
    \hline
    \bfseries Algorithm 3 Two-Step Selection \\
    \hline
    \bfseries Require: ${\bf{\Psi }}$ \\
    1:  ${\bf{\Psi }}=$ empty matrix \\
    2:  Step 1:\\
    3:  for $k \le {N_u}$ do \\
    4:  ~ $\left[ {j_1^{(k)},\dots,j_C^{(k)}} \right] = \mathop {\arg \max }\limits_{n = 1,...,M} \left( {{{\bf{U}}^H}{\bf{\bar h}}_k}{\bf{\bar h}}_k^H{\bf{U}} \right)_{k,k}$ \\
    5:  ~ for $\left( C+n\right)\left( {k - 1} \right) + 1 \le m \le \left( C+n\right)k$ do\\
    6:  ~~~ ${i_m} = j_{m - C\left( {k - 1} \right)}^{(k)}$ \\
    7:  ~~~ ${\bf{\Psi }}\left( {:,m} \right) = {{\bf{e}}_{{i_m}}}$ \\
    8:  ~ end for \\
    9:  end for \\
    10: Step 2: \\
    11: for $1 \le i \le {N_s}$ do \\
    12: ~ for $1 \le j \le N_u \left( C+n\right) -i+1$ do\\
    13: ~~~ Calculate $R_i\left( j\right) $ with $i$th column of ${\bf{\Psi }}$ removed\\
    14: ~ end for\\
    15: ~ $i_{\rm{remove}} = \mathop {\arg \max } R_i\left( j\right)$\\
    16: ~ ${\bf{\Psi }}$ with the $i_{\rm{remove}}$th column removed\\
    17: end for\\
    \bfseries return ${\bf{\Psi }}$ \\
    \hline
  \end{tabular}
\end{table}

It should be noted that we do not suggest to choose less than $N_s$ beams and then add new beams in. This is because under Ricean fading, the channel holds sparsity if the Ricean $K$-factor grows large. A small number of beams can capture the main lobe of the channel. If we assume there are more RF chains and choose more beams at first, then the main lobe will be totally captured and the inter-user interference can be thoroughly canceled. These beams are adequate and we only need to remove the ones which make insignificant contributions. However, if we assume there are less RF chains and choose less beams, then the LoS paths cannot be completely covered and the interference cannot be completely eliminated, which will jeopardize the results from the first step.

Note that for the analog beamformer ${\bf F} = \left[ {\bf{f}}_1,\dots,{\bf{f}}_{N_s} \right]$, the difference made by changing the order of ${\bf{f}}_1,\dots,{\bf{f}}_{N_s}$ is negligible, but it requires much more comparisons to find the optimal order according to the exhaustive searching rule. Hence, to avoid expensive and time-consuming comparisons, it is not a good idea to follow the exhaustive searching rule in Step 2, but to keep the order of the beams derived in Step 1. For example, $N_s=4, N_u=2, n=1$, and the selected beams in Step 1 are ${\bf f}_1,{\bf f}_2,{\bf f}_3$ for user 1 and ${\bf f}_4,{\bf f}_5,{\bf f}_6$ for user 2. Then the order of the beams derived in Step 1 is $1,2,3,4,5,6$. In Step 2, this order will not be changed. If we evaluate the performance when ${\bf f}_2$ is removed, then the only analog beamformer to be evaluated will be ${\bf F} = \left[ {\bf{f}}_1,{\bf{f}}_3,{\bf{f}}_4,{\bf{f}}_5,{\bf{f}}_6 \right]$. Other beam orders such as $1,1,3,3,5$ or $3,4,1,6,5$ will not be in consideration. Therefore, a large amount of power-level comparisons caused by the exhaustive searching are avoided.

\begin{table*}[!t]
  \caption{The Numbers of Comparisons of the Schemes}
  \label{tab:selection time}
  \centering
  \begin{tabular}{|l|l|}
    \hline
    Exhaustive Searching & $M^{{N_s}}$ \\
    \hline
    Per-user Selection & $M \times N_u$ \\
    \hline
    Two-step Selection & $M \times N_u +\frac{1}{2} \left[{N_u^2 \left( C+n\right)^2}+N_u \left( C+n\right)-N_s^2-N_s\right]$\\
    \hline
  \end{tabular}
\end{table*}

Comparisons required by the two-step selection have two parts. The first part contains the same number of comparisons as the per-user selection scheme, $M \times N_u$, which are caused by the power projection in Step 1. The second part includes the rounds to remove the extra beams, which requires
\begin{equation}\label{two-step selection time}
\begin{aligned}
&\sum\limits_{i = 1}^{{N_u}\left(C+n\right)-N_s} {\left[N_u \left( C+n\right) -i+1\right]} \\
=& \frac{1}{2} \left[{N_u^2 \left( C+n\right)^2}+N_u \left( C+n\right)-N_s^2-N_s\right]
\end{aligned}
\end{equation}
comparisons. We realize that increasing $n$ improves the final selection results, but increases the number of comparisons at the same time. The total number of comparisons required by the two-step selection is given in Table \ref{tab:selection time}. With $M=256, N_s=8, N_u=4$ and $n=2$, the numbers of comparisons for exhaustive searching, per-user selection and two-step selection are $1.8\times 10^9$, $1024$ and $1124$, respectively.

It is worth mentioning that when compared with non-codebook-based hybrid beamforming designs, the analog beam selection schemes have relatively inferior performance due to the codebook constraint in RF module. However, the complexity of the non-codebook-based design is generally much higher than the codebook-based design. For example, non-codebook-based methods are usually realized through phase shifter networks. We need to adjust all the phase shifters if the beamforming weights are updated, which further causes more power consumption and brings challenge to the synchronization of the phase shifter network. While for the Butler matrix, the beamforming weights are formed once the Butler matrix is implemented on the hardware, and afterwards the only thing is to switch among these beams. Considering both the advantages and disadvantages, it will be a low-cost but efficient way to realize hybrid beamforming by adopting Butler matrix structure and the analog beam selection schemes.

\section{Numerical Results}\label{Sec:Numerical Results}
To validate the derived uplink and downlink achievable rate approximations in \emph{Theorems \ref{UL ZF appr1}--\ref{DL MRT2 appr}} and evaluate the performance of the proposed analog beam selection schemes for this DFT-based hybrid beamforming multiuser system, we here conduct computer simulations and discuss the numerical results.


\begin{figure}
  \centering
  \includegraphics[scale=0.48]{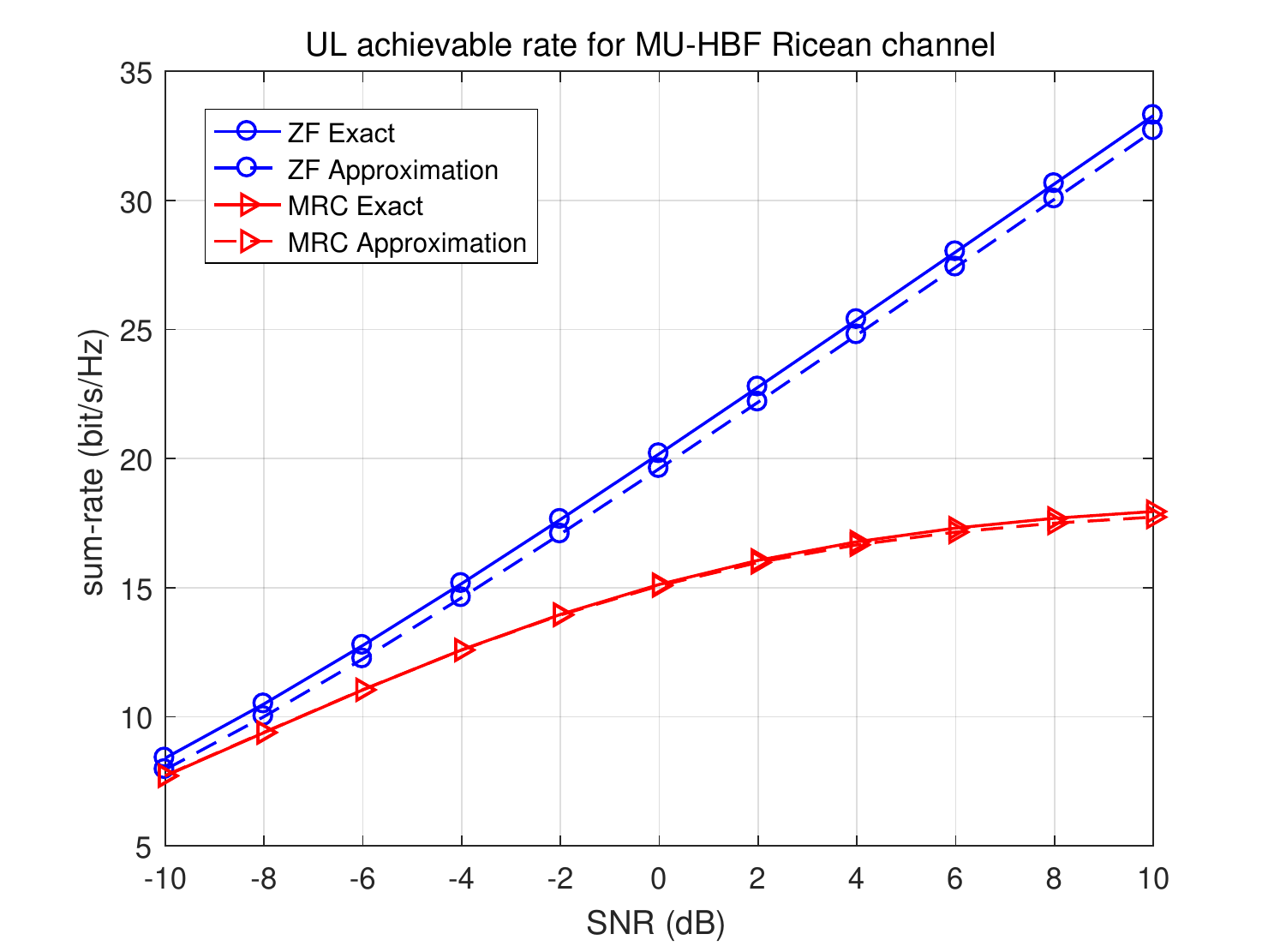}
  \caption{Uplink achievable rates versus SNR for the DFT-based hybrid beamforming multiuser system, with $M=512, N_s=32, N_u=4, K_1=K_2=...=K_{N_u}=10$dB and the two-step selection adopted.}
  \label{UL SNR evaluation}
\end{figure}
\begin{figure}
  \centering
  \includegraphics[scale=0.48]{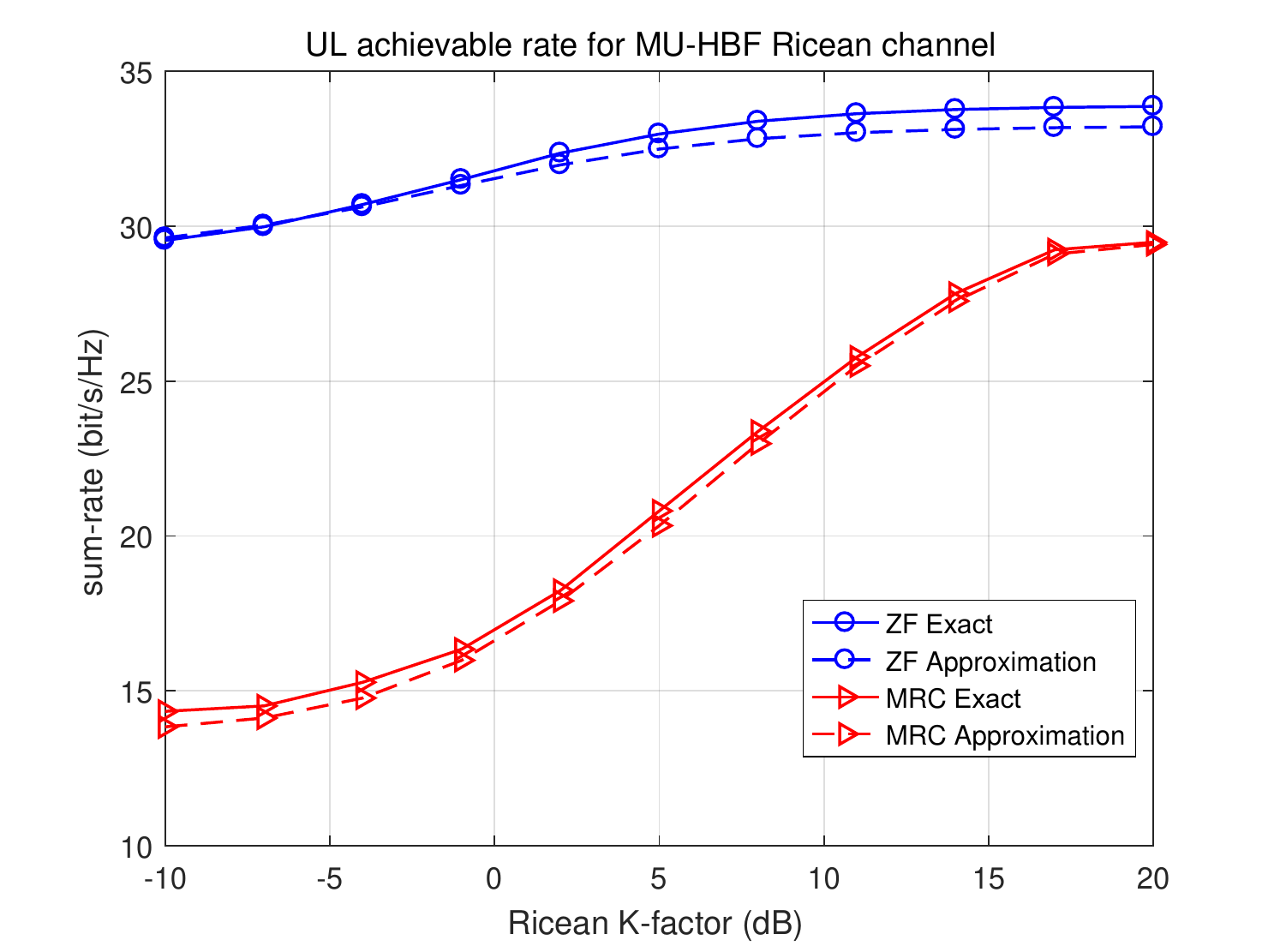}
  \caption{Uplink achievable rates versus Ricean $K$-factor for the DFT-based hybrid beamforming multiuser system, with $M=512, N_s=32, N_u=4, SNR=10$dB and the two-step selection adopted.}
  \label{UL K evaluation}
\end{figure}

Fig.~\ref{UL SNR evaluation} compares the Monte Carlo (exact) and approximation results of the uplink achievable rates. In order to save the processing time, the two-step selection is adopted. Here, signal-to-noise ratio (SNR) measures the uplink transmit power of each user against the noise power on each BS antenna. In the simulations, we set $M=512, N_s=32, N_u=4$. For convenience, the Ricean $K$-factors of the user are set to be equal as $K_1=K_2=\cdots=K_{N_u}=10$dB. The elements of $\bar{\bf H}$ are i.i.d.~and generated with zero mean and unit variance and fixed during a statistical period of 1000 drops. From the results in Fig.~\ref{UL SNR evaluation}, we can see that the ZF approximation approaches the exact results closely, and the MRC approximation almost coincides with the exact results. These results strongly validate the effectiveness of both the ZF and the MRC approximations. Also, we observe that in the low SNR regime, it is the noise that impacts the achievable rate most. The MRC receiver combines and improves the received power of the target signal; therefore it has competitive behavior with the ZF receiver. As SNR increases, the power of both the target signal and the inter-user interference increases. ZF effectively eliminates the interference and performs far more better than the MRC receiver. Since both the ZF receiver and the MRC receiver are susceptible to noise and interference, respectively, their gap in performance becomes wider with the increase of SNR.

Fig.~\ref{UL K evaluation} examines the influence of Ricean $K$-factor on the achievable rate. The simulations were conducted with the same configuration as in Fig.~\ref{UL SNR evaluation} except that SNR was set to 10dB. From Fig.~\ref{UL K evaluation}, it can be clearly observed that when the Ricean $K$-factor is small, the channel becomes more like Rayleigh distributed, and the rate performance is poor. With the increase of the Ricean $K$-factor, both the ZF receiver and the MRC receiver achieve higher achievable rate, especially for the MRC receiver. This is because on the one hand, the increase of the Ricean $K$-factor reflects greater dominance of the LoS path as well as the lower channel sidelobes, which further contributes to less inter-user interference; on the other hand, according to the analysis in Section \ref{Sec:Uplink Rate}, the achievable rate in pure Ricean fading conditions is not always higher than that under Rayleigh fading conditions except that the selected analog beams capture the LoS paths and contribute to the orthogonality among the analog beamformed LoS paths from different users. The improvement of the achievable rate reflects the effectiveness of the two-step beam selection results.

\begin{figure}
  \centering
  \includegraphics[scale=0.48]{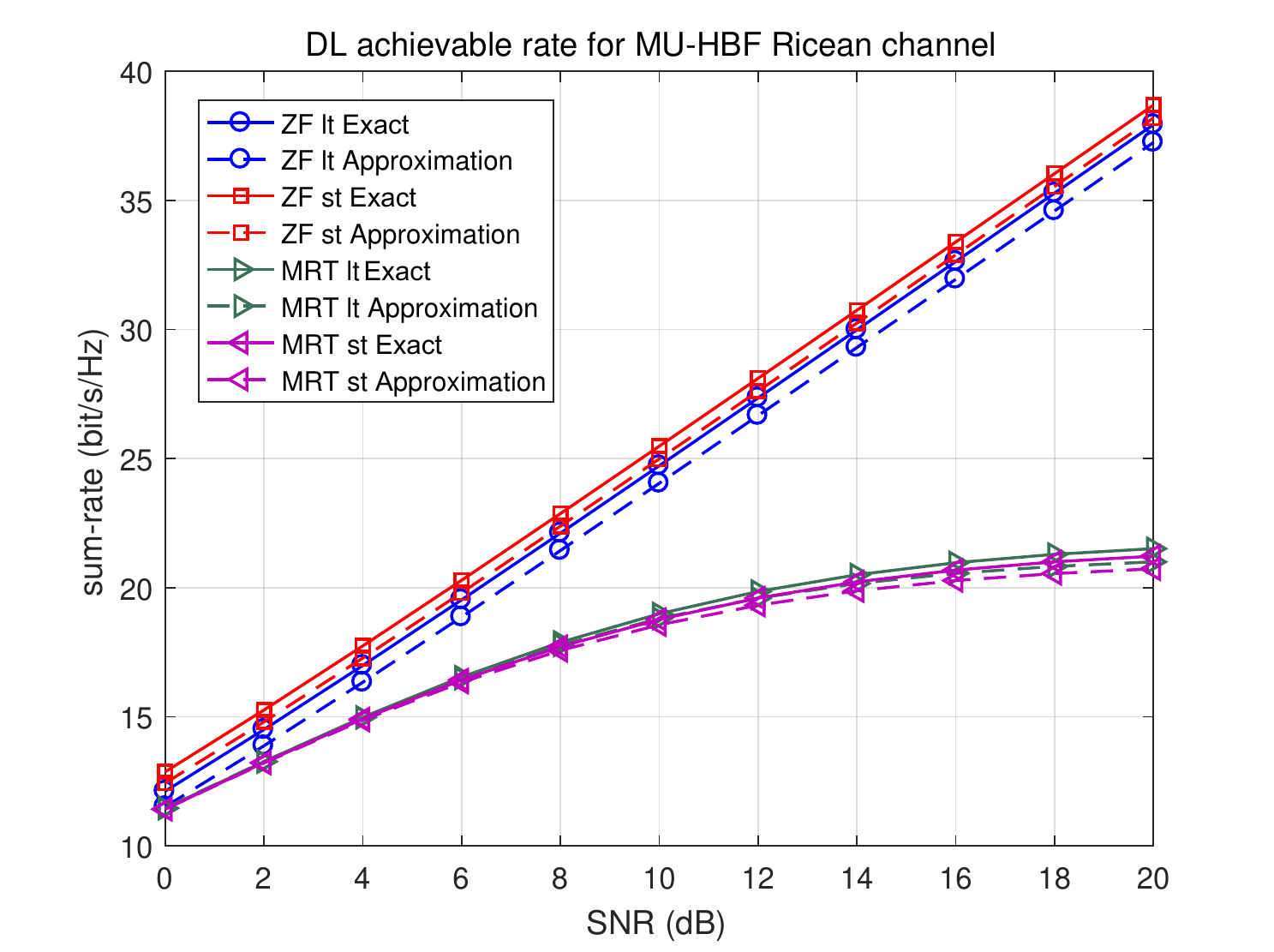}
  \caption{Downlink achievable rate versus SNR for the DFT-based hybrid beamforming multiuser system, with $M=512, N_s=32, N_u=4, K_1=K_2=...=K_{N_u}=10$dB and the two-step selection adopted.}
  \label{DL SNR evaluation}
\end{figure}
\begin{figure}
  \centering
  \includegraphics[scale=0.48]{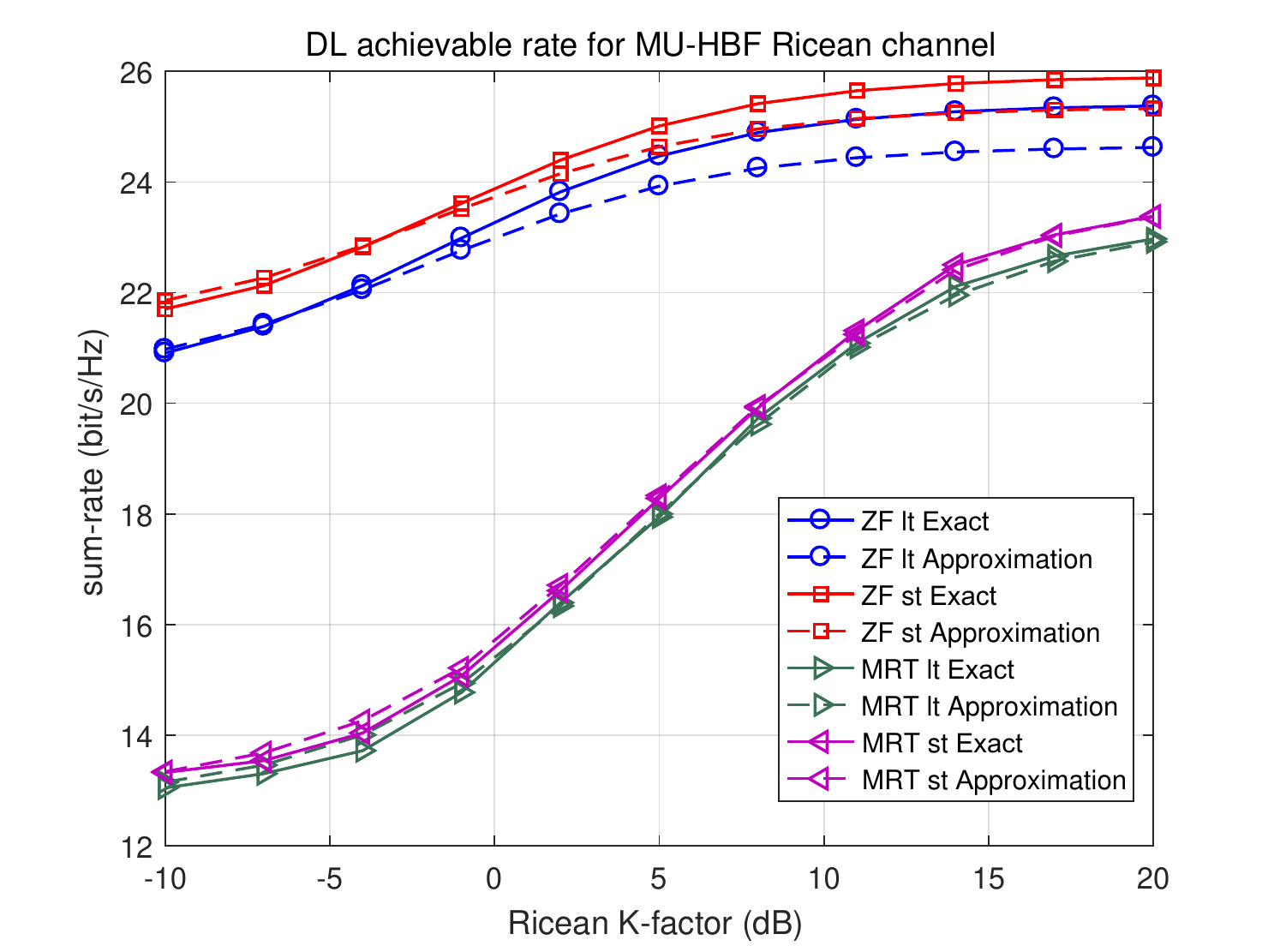}
  \caption{Downlink achievable rate versus Ricean $K$-factor for the DFT-based hybrid beamforming multiuser system, with $M=512, N_s=32, N_u=4, SNR=10$dB and the two-step selection adopted.}
  \label{DL K evaluation}
\end{figure}

For the downlink, Fig. \ref{DL SNR evaluation} and Fig. \ref{DL K evaluation} give the exact and the approximation results of the achievable rate versus SNR and Ricean $K$-factor, respectively. Simulation conditions are the same as that used in Fig.~\ref{UL SNR evaluation} and Fig. \ref{UL K evaluation} accordingly. It should be noted that the downlink SNR represents the total transmit power at the BS side against the noise power on each user antenna. A close observation from the figures reveals that the approximations are close to the exact results, demonstrating that the downlink approximations are valid. Similarly, the ZF precoder outperforms the MRT precoder in the high SNR regime, and the achievable rate of both these two precoders is proportional to the Ricean $K$-factor. When it comes to the normalization methods, we can see that for the ZF precoder the short-term normalization always achieves higher rate than the long-term normalization, while for the MRT precoder the two normalization methods have mixed performance since each of them has its own advantages, which justifies our previous analysis. Furthermore, we also see that for the ZF receiver/precoder, the approximations are less tight due to the loose central Wishart approximation when Ricean $K$-factors are large but the difference between $N_s$ and $N_u$ is small.

\begin{figure}
  \centering
  \includegraphics[scale=0.48]{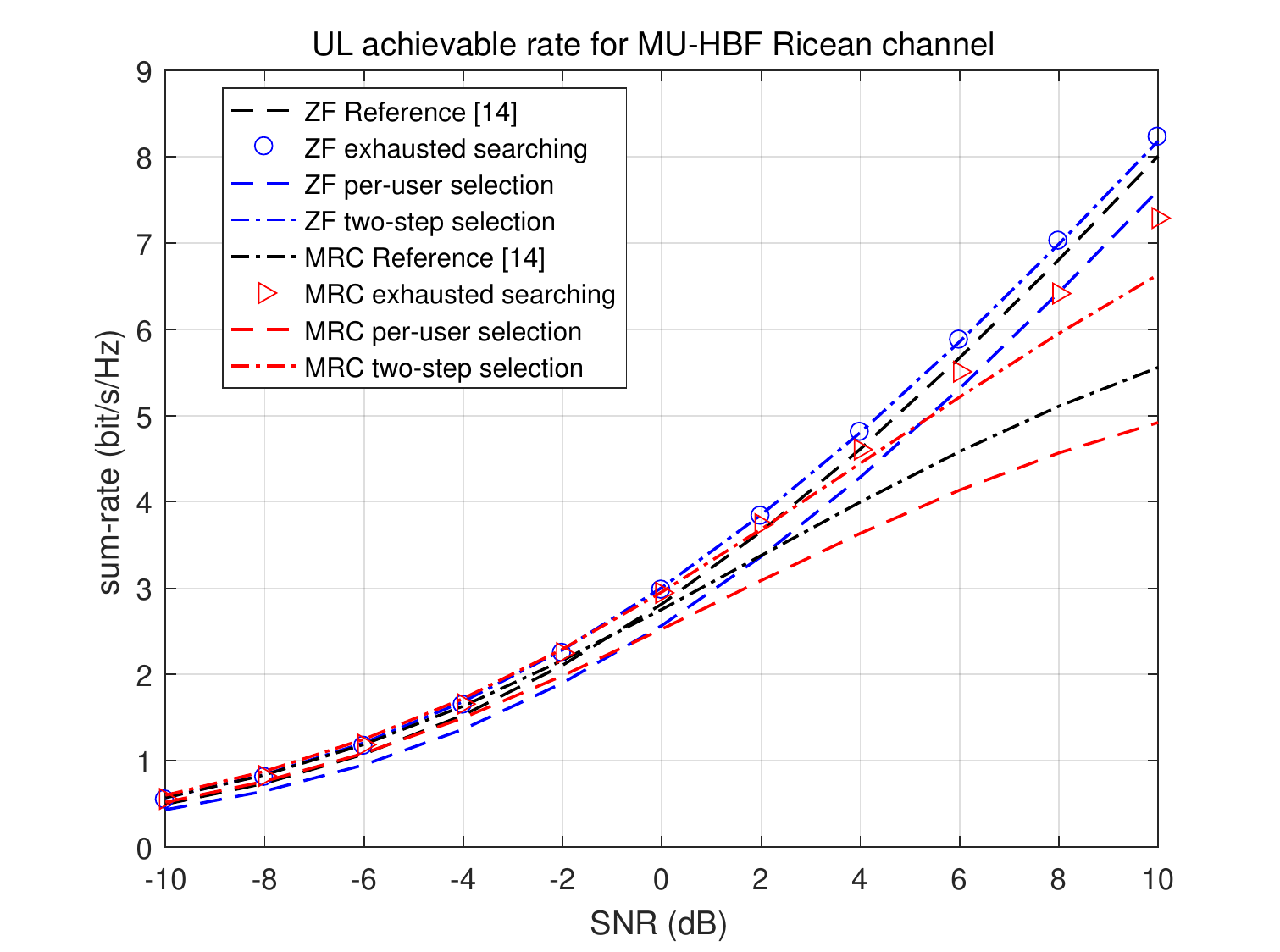}
  \caption{Comparison of the analog beam selection schemes in the uplink, with $M = 128,{N_s} = 4, n = 1, {N_u} = 2$, and $ K_1=K_2=...=K_{N_u}=10$dB.}\label{UL F evaluation}
\end{figure}
\begin{figure}
  \centering
  \includegraphics[scale=0.48]{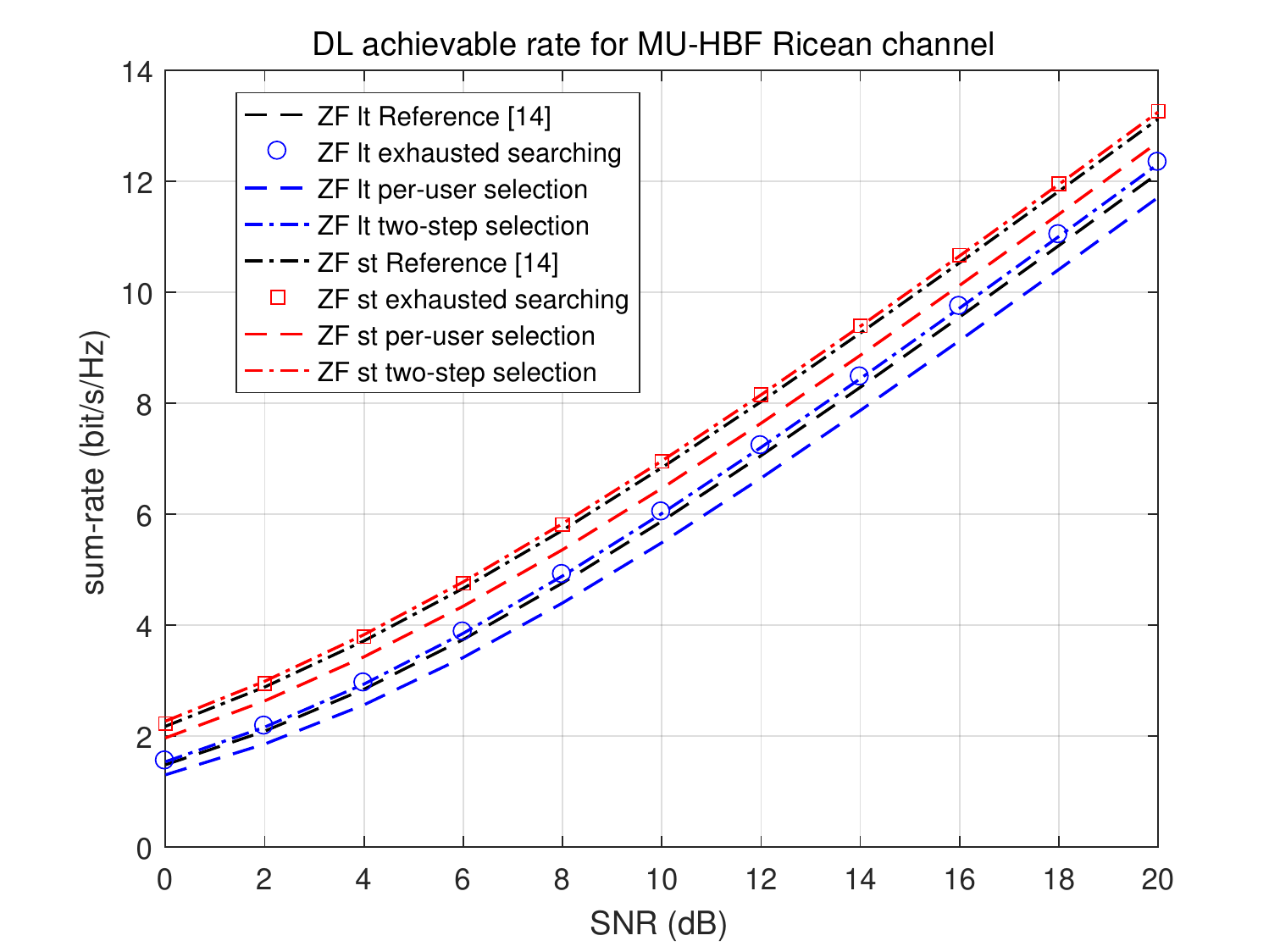}
  \caption{Comparison of the analog beam selection schemes in the downlink when ZF precoder is adopted, with $M = 128,{N_s} = 4, n = 1, {N_u} = 2$, and $ K_1=K_2=...=K_{N_u}=10$dB.}\label{DL F evaluation1}
\end{figure}

Next, we examine the exhaustive search, the per-user selection and the two-step selection through Monte Carlo simulations.
Fig.~\ref{UL F evaluation} compares the uplink achievable rate of the proposed three selections with the two-stage multiuser hybrid precoder introduced in \cite{Alkhateeb2015}. The two-stage multiuser hybrid precoder first chooses analog beams from a codebook through the downlink training process and then calculates the digital beamformer with low-dimensional CSI that is fed back to the BS. When implementing two-stage hybrid precoders in the simulations, we adopt DFT codebooks and ZF/MRC/MTR receivers/precoders as well, and assume that CSI is perfectly sent back to the BS. Considering the implementation of the exhaustive search, we set $M = 128,{N_s} = 4,{N_u} = 2$ and $K_1=K_2=...=K_{N_u}=10$dB. The margin for the two-step selection scheme is set to $n=1$. An obvious performance gap can be seen between the exhaustive search and the per-user selection, especially for the MRC receiver. The reason is that on the one hand, the per-user selection only guarantees the capture of the LoS paths, without considering the interference among the analog beamformed LoS paths from different users. On the other hand, even if the interference is not completely eliminated by the exhaustive search based analog beamforming, the MRC receiver behaves much more sensitive to the interference than the ZF receiver. If the two-step selection is adopted, the achievable rate will become much closer to that of the exhaustive search. The rate improvement is significant for the MRC receiver, which demonstrates that the beams chosen by the two-step selection scheme are more effective for interference cancellation. The two-stage multiuser hybrid precoder behaves better than the per-user selection because it utilizes short-term CSI instead of long-term CSI. However, its performance is inferior to the two-step selection since the latter takes advantage of the rate approximations and benefits from a bigger candidate set selected in the first step. Moreover, the numbers of comparisons required by exhaustive search, per-user selection and two-step selection are ${128^4}$, 256 and 282. If we further increase $M$ or $N_s$, the exhaustive search will be incredibly time-consuming, while the number of comparisons required by the two-step selection increases only slightly. Fig.~\ref{DL F evaluation1} and Fig.~\ref{DL F evaluation2} illustrate the downlink comparison results of these three selections, which leads to similar insights as the uplink results. We can now conclude that the two-step selection is a near optimal scheme with low complexity.

\begin{figure}
  \centering
  \includegraphics[scale=0.48]{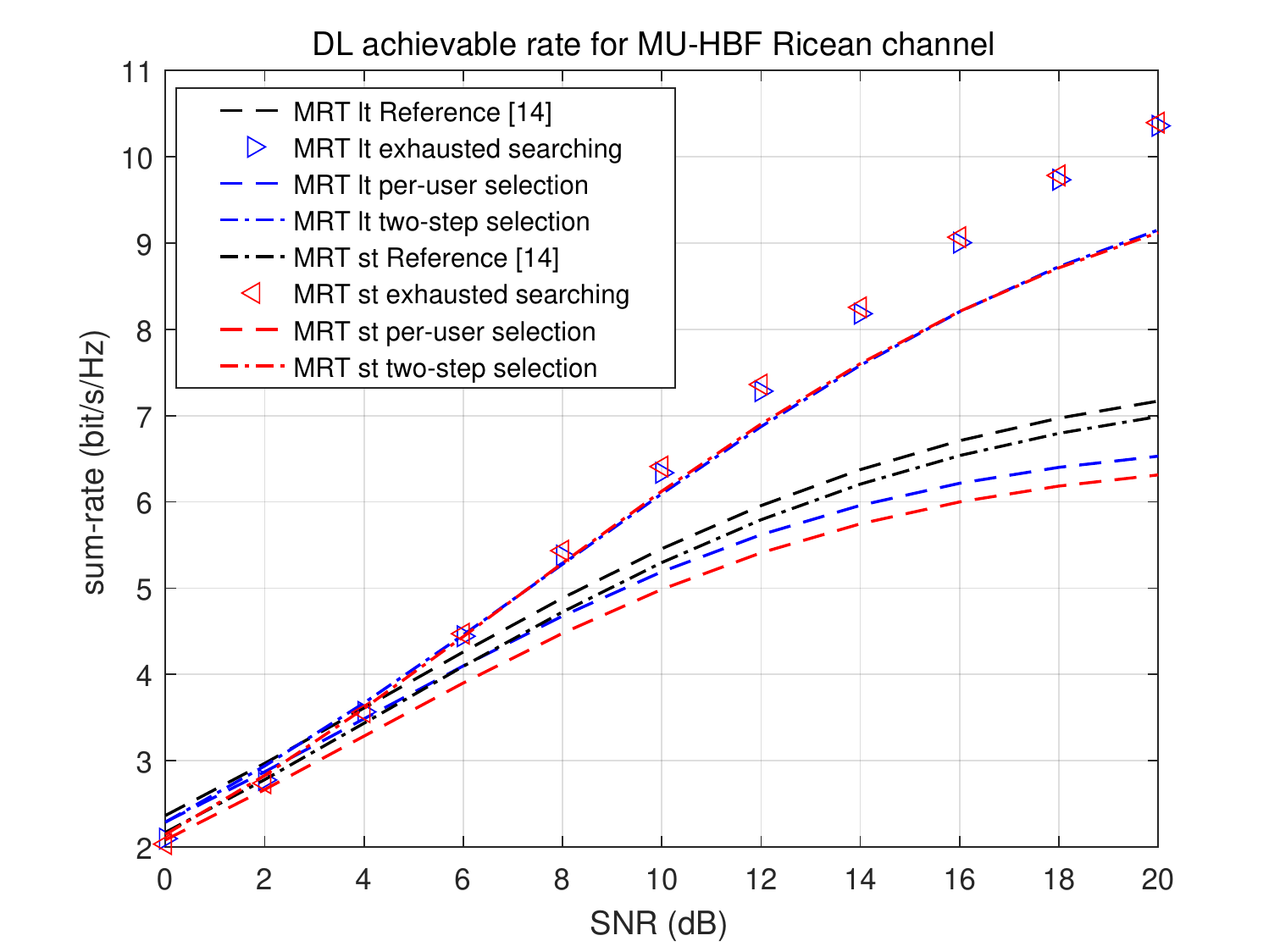}
  \caption{Comparison of the analog beam selection schemes in the downlink when MRT precoder is adopted, with $M = 128,{N_s} = 4, n = 1, {N_u} = 2$, and $ K_1=K_2=...=K_{N_u}=10$dB.}\label{DL F evaluation2}
\end{figure}

To better approach the performance of exhaustive search, we can keep more margin in the first step, that is, increase the value of $n$. Fig.~\ref{TwoStep n evaluation} illustrates the improvement of the achievable rate when $n$ is set from 1 to 2 and the long-term normalization is employed by the MRT precoder. This improvement comes from the better separation among the analog beamformed LoS paths from different users. If we increase the margin $n$, we can select beams from a more complete beam subset and enhance the effectiveness of the selection results. Furthermore, when $n$ increases from 1 to 2, the number of comparisons required by the two-step selection increases from 267 to 282, with only 15 comparisons more. Therefore, we can harvest significant performance enhancement with little more cost.

\begin{figure}
  \centering
  \includegraphics[scale=0.48]{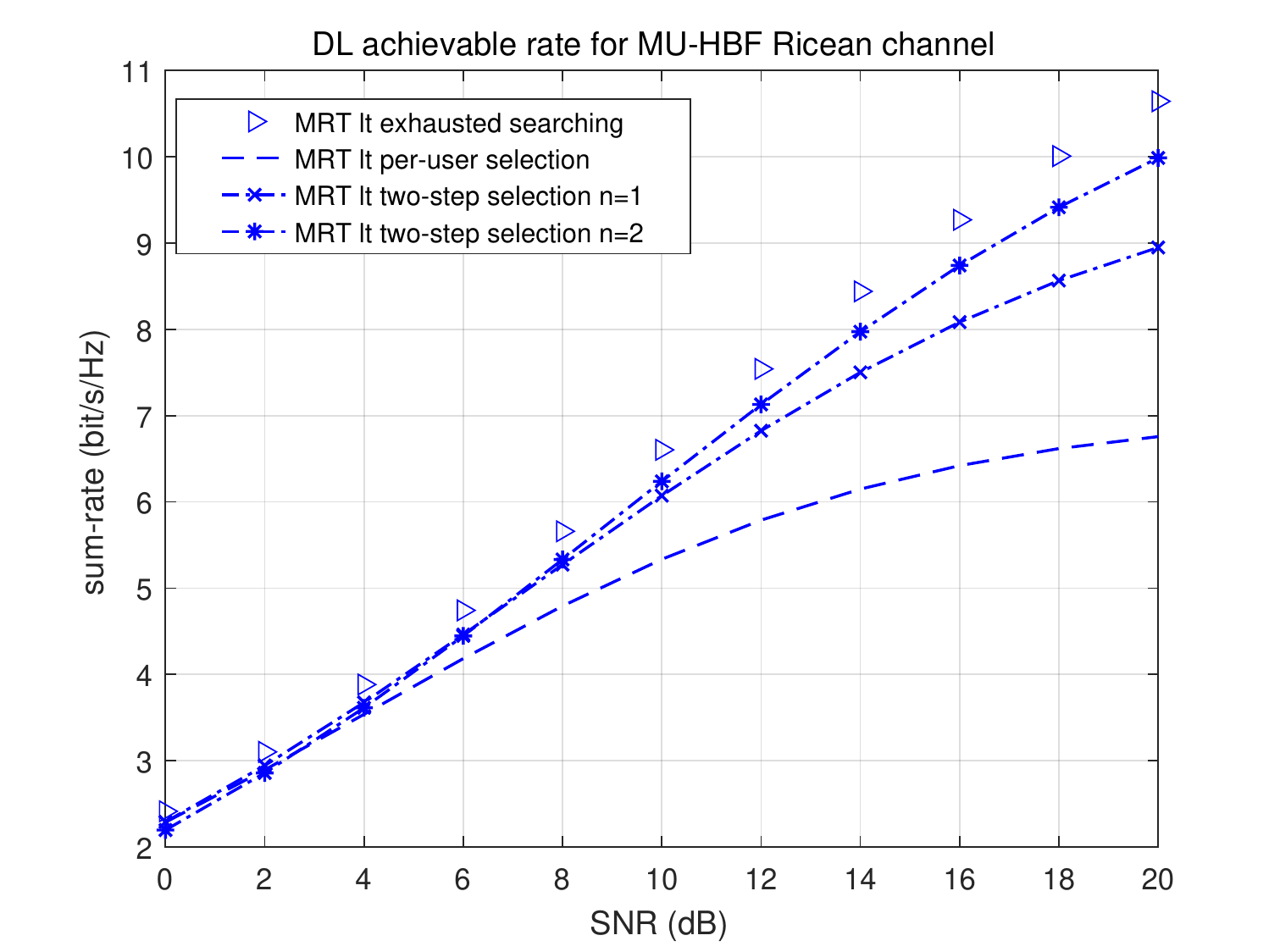}
  \caption{Evaluation of the margin $n$ for the downlink MRT precoder when the long-term normalization is adopted, with $M = 128,{N_s} = 4, {N_u} = 2$, and $ K_1=K_2=...=K_{N_u}=10$dB and $n=1,2$.}\label{TwoStep n evaluation}
\end{figure}

\section{Conclusion}\label{Sec:Conclusion}
This paper studied the analog beam selection schemes for the DFT-based hybrid beamforming multiuser system. For both uplink and downlink, we analyzed the achievable rates of the system using the ZF/MRC receivers and the ZF/MRT precoders considering long-term and short-term downlink normalization methods. Based on our approximations and asymptotic expressions of the achievable rates, we presented three analog beam selection schemes. The first one is the exhaustive searching scheme which is the optimal solution but with huge time-consumption. To avoid  power-level comparisons, we then proposed the projected power based per-user selection. Then we further proposed the two-step selection scheme which can obtain near-optimal results and is much more time-saving. Simulation results demonstrated that the asymptotic analysis is effective and the performance of the two-step selection approaches to that of exhaustive searching.

\appendix

\subsection{Proof of Theorem \ref{UL ZF appr1}}\label{Apdx:TheoULZF1}

\begin{figure*}[!b]
\normalsize
\newcounter{MYtempeqncnt}
\setcounter{MYtempeqncnt}{\value{equation}}
\setcounter{equation}{80}
\hrulefill
\begin{equation}\label{MRC UL rate appr x1 proof}
\mathbb{E}\left\{ {{{\left\| {{{\bf{g}}_{eq,k}}} \right\|}^4}} \right\}
\!=\! \frac{{\beta _k^2}}{{{{\left( {{K_k} \!+\! 1} \right)}^2}}} \left[ {K_k^2 {{ \left\| {{\bf{F\bar h}}_k} \right\|}^4} \!+\!{2{K_k}\left({{N_s} \!+\! 1} \right){{\left\| {{\bf{F\bar h}}_k} \right\|}^2}}}{ \!+\! {N_s} \left( {{N_s} \!+\! 1} \right)} \right]\\
\!=\! \frac{{\beta _k^2}}{{{{\left( {{K_k} \!+\! 1} \right)}^2}}} \left[ {\chi _3^{(k)2} \!+\! 2\chi _3^{(k)} \!-\! {N_s}} \right],
\end{equation}
\setcounter{equation}{\value{MYtempeqncnt}}
\setcounter{equation}{81}
\hrulefill
\begin{equation}\label{MRC UL rate appr x2 proof}
\begin{aligned}
\mathbb{E}\left\{ {{{\left| {{\bf{g}}_{eq,k}^H{{\bf{g}}_{eq,j}}} \right|}^2}} \right\}
& = \frac{{{\beta _k}{\beta _j}}}{{\left( {{K_k} + 1} \right)\left( {{K_j} + 1} \right)}}\left( {{K_k}{K_j}{{\left| {{\bf{\bar h}}_j^H{{\bf{F}}^H} {\bf{F}}{{{\bf{\bar h}}}_k}} \right|}^2} + {K_k}{{\left\| {{\bf{F\bar h}}_k^{}} \right\|}^2} + {K_j}{{\left\| {{\bf{F\bar h}}_j^{}} \right\|}^2} + {N_s}} \right)\\
& = \frac{{{\beta _k}{\beta _j}}}{{\left( {{K_k} + 1} \right)\left( {{K_j} + 1} \right)}}\left( {{K_k}{K_j}{{\left| {{\bf{\bar h}}_j^H{{\bf{F}}^H} {\bf{F}}{{{\bf{\bar h}}}_k}} \right|}^2} + {K_j}{{\left\| {{\bf{F\bar h}}_j^{}} \right\|}^2} + \chi _3^{(k)}} \right).
\end{aligned}
\end{equation}
\setcounter{equation}{\value{MYtempeqncnt}}
\end{figure*}

Since ${{\bf{G}}_{eq}} = {\bf{FG}}$ and $\bf{G} = {\bf{HD}}^{\frac{1}{2}}$, we can write a detailed expression of the uplink achievable rate of the ZF receiver as
\begin{equation}\label{ZF UL rate real2}
{R^{\rm ZF}} = \sum\limits_{k = 1}^{{N_u}} \mathbb{E}{\left\{ {{{\log }_2}\left( {1 + \frac{{{P_{avg}}{\beta _k}}}{{{{\left[ {{{\left( {{{\bf{H}}^H}{{\bf{F}}^H} {\bf{FH}}} \right)}^{ - 1}}} \right]}_{k,k}}}}} \right)} \right\}}.
\end{equation}
First, we recall the Jensen's inequality on ${\log}_2 \left( 1+a \exp \left(x\right) \right)$ for $a>0$ \cite{Matthaiou2011}, which is expressed as
\begin{equation}\label{Jensen's inequality 1}
\mathbb{E} \left\{ {\log}_2 \left( 1 + a \exp \left(x\right) \right) \right\} \ge {\log}_2 \left( 1+a\exp\left( \mathbb{E} \left\{ x \right\}\right) \right).
\end{equation}
Applying \eqref{Jensen's inequality 1} into \eqref{ZF UL rate real2}, we can rewrite the uplink rate as
\begin{equation}\label{ZF UL rate appr1 proof1}
{R^{\rm ZF}} \ge \sum\limits_{k = 1}^{{N_u}} {{{\log }_2}\left( {1 + {P_{avg}}{\beta _k} \exp \left( \mathbb{E}\left\{{{X_k}}\right\} \right)} \right)} ,
\end{equation}
where
\begin{equation}\label{ZF UL rate appr1 proofX}
{X_k} = {\ln \left( {\frac{1}{{{{\left[ {{{\left( {{{\bf{H}}^H}{{\bf{F}}^H} {\bf{FH}}} \right)}^{ - 1}}} \right]}_{k,k}}}}} \right)} .
\end{equation}
For convenience, we denote ${{\bf{H}}_{eq}} = {\bf{FH}}$ as the equivalent instantaneous channel matrix. Then $X_k$ is equal to
\begin{equation}\label{ZF UL rate appr1 proofX1}
{X_k} = {\ln \left( {\frac{1}{{{{\left[ {{{\left( {{{\bf{H}}_{eq}^H} {\bf{H}}_{eq}} \right)}^{ - 1}}} \right]}_{k,k}}}}} \right)} .
\end{equation}
Since for any matrix ${\bf{Q}}$, it holds that \cite{Matthaiou2011}
\begin{equation}\label{ZF UL rate appr1 proof2}
{\left[ {{{\left( {{{\bf{Q}}^H}{\bf{Q}}} \right)}^{ - 1}}} \right]_{k,k}} = \frac{{\det \left( {{\bf{Q}}_k^H{{\bf{Q}}_k}} \right)}}{{\det \left( {{{\bf{Q}}^H}{\bf{Q}}} \right)}},
\end{equation}
where ${{\bf{Q}}_k}$ denotes ${\bf{Q}}$ with the $k$th column removed. Applying \eqref{ZF UL rate appr1 proofX1} and \eqref{ZF UL rate appr1 proof2} into \eqref{ZF UL rate appr1 proofX}, we write the expectation of $X_k$ as
\begin{equation}\label{ZF UL rate appr1 proof3}
\mathbb{E}\!\left\{{{X_k}}\right\}\! =\! \mathbb{E}\!\left\{{\ln \!\left({\det \left(\!{{\bf{H}}_{eq}^H{{\bf{H}}_{eq}}}\right)\!}\right)\!}\right\} - \mathbb{E}\!\left\{{\ln\!\left(\!{\det\left({{\bf{H}}_{eq,k}^H{{\bf{H}}_{eq,k}}}\right)\!}\right)\!}\right\}\!.
\end{equation}
To simplify the expressions, we define
\begin{equation}\label{ZF UL rate appr1 proofR}
{{\bf{R}}_{eq}} \triangleq {\bf{H}}_{eq}^H{{\bf{H}}_{eq}},\quad {{\bf{\bar R}}_{eq,k}}\triangleq{\bf{H}}_{eq,k}^H{{\bf{H}}_{eq,k}},
\end{equation}
where ${{\bf{H}}_{eq,k}}$ denotes ${{\bf{H}}_{eq}}$ with the $k$th column removed. Recalling \eqref{multiuser channel H}, the effective channel ${{\bf{H}}_{eq}}$ can be written as
\begin{equation}\label{ZF UL rate appr1 proof4}
{{\bf{H}}_{eq}} \!=\! {\bf{F\bar H}}{\left[ {{\bf{\Omega }}{{\left( {{\bf{\Omega }} \!+\! {{\bf{I}}_{{N_u}}}} \right)}^{ - 1}}} \right]^{\frac{1}{2}}} \!+\! {\bf{F}}{{\bf{H}}_w}{\left[ {{{\left( {{\bf{\Omega }} \!+\! {{\bf{I}}_{{N_u}}}} \right)}^{ - 1}}} \right]^{\frac{1}{2}}},
\end{equation}
where the first component is denoted by ${\bf{T}}$.
Since ${\bf{F}}{{\bf{H}}_w}$ is an extraction from the DFT transposition of ${{\bf{H}}_w}$, ${{\bf{H}}_{eq}}$ follows a Gaussian distribution with the mean matrix equal to ${\bf{T}}$ and the variance matrix of a row vector equal to ${\bf{\Sigma }} = {\left( {{\bf{\Omega }} + {{\bf{I}}_{{N_u}}}} \right)^{ - 1}}$. Hence, ${{\bf{R}}_{eq}}$ follows a non-central Wishart distribution, i.e., ${{\bf{R}}_{eq}}\sim {{\mathcal{W}}_{{N_u}}}\left( {{N_s},{\bf{T}},{\bf{\Sigma }}} \right)$. According to \cite{Steyn1972}, ${{\bf{R}}_{eq}}$ can be approximated by a central Wishart distribution with covariance ${\bf{\hat \Sigma }}$ defined in \eqref{central Wishart covariance}.
The positive definite Hermitian matrix ${\bf{\hat \Sigma }}$ can be eigenvalue decomposed by
\begin{equation}\label{ZF UL rate appr1 proof5}
\begin{aligned}
{\bf{\hat \Sigma }} &= {\bf{U}}_{\hat \Sigma} ^H{\bf{\Lambda }}{{\bf{U}}_{\hat \Sigma} },\\
{\bf{\Lambda }} &= {\rm diag}\left\{ {{\alpha _i}} \right\}_{i = 1}^{{N_u}},\\
\infty & \ge  {\alpha _{{N_u}}} \ge  \cdots  \ge {\alpha _1} \ge 0.
\end{aligned}
\end{equation}
Then we can further assume that
\begin{equation}\label{ZF UL rate appr1 proof R appr}
\det \left({{\bf{R}}_{eq}} \right) \approx \det \left({\bf{\Lambda }}{{\bf{H}}_1}{\bf{H}}_1^H \right),
\end{equation}
where ${{\bf{H}}_1}\in \mathbb{C}^{N_u \times N_s}$ follows a complex Gaussian distribution with $\bf{0}$ mean and ${{\bf{I}}_{{N_u}}} \otimes {{\bf{I}}_{{N_s}}}$  variance. Utilizing \emph{Lemma 4} of \cite{Jin2010}, we get that
\begin{equation}\label{ZF UL rate appr1 proof6}
\begin{aligned}
\mathbb{E}\left\{ {\ln \det \left( {{{\bf{R}}_{eq}}} \right)} \right\} &\approx \sum\limits_{i = 1}^{{N_u}} {\psi \left( {{N_s} - i + 1} \right)}  + \ln \det \left( {\bf{\Lambda }} \right)\\
 &= \sum\limits_{i = 1}^{{N_u}} {\left( {\psi \left( {{N_s} - i + 1} \right) + \ln {\alpha _i}} \right)} .
\end{aligned}
\end{equation}
Similarly, ${{\bf{\bar R}}_{eq,k}}$ satisfies $\det \left({{\bf{\bar R}}_{eq,k}}\right) \approx \det \left({{\bf{\bar \Lambda }}_k}{{\bf{H}}_{1,k}}{\bf{H}}_{1,k}^H \right)$, where ${\bf{H}}_{1,k}\in \mathbb{C}^{\left(N_u-1\right) \times N_s}$ follows a complex Gaussian distribution with $\bf{0}$ mean and ${{\bf{I}}_{{N_u-1}}} \otimes {{\bf{I}}_{{N_s}}}$ variance, ${{\bf{\bar \Lambda }}_k} = {\rm diag}\left\{ {{{\bar \alpha }_{k,i}}} \right\}_{i = 1}^{{N_u} - 1}$, and $\infty \ge {\bar \alpha _{k,{N_u} - 1}} \ge  \cdots  \ge {\bar \alpha _{k,1}} \ge 0$. Accordingly,
\begin{equation}\label{ZF UL rate appr1 proof7}
\mathbb{E}\left\{ {\ln \det \left( {{\bf{\bar R}}_{eq,k}} \right)} \right\} \!\approx\! \sum\limits_{i = 1}^{{N_u-1}} {\left( {\psi \left( {{N_s} \!-\! i \!+\! 1}\right) \!+\! \ln{\bar \alpha _{k,i}}} \right)} .
\end{equation}
With \eqref{ZF UL rate appr1 proof6} and \eqref{ZF UL rate appr1 proof7}, we can rewrite \eqref{ZF UL rate appr1 proof3} as
\begin{equation}\label{ZF UL rate appr1 proof8}
\begin{aligned}
\mathbb{E}\left\{{{X_k}}\right\} \approx &\sum\limits_{i = 1}^{{N_u}} {\left( {\psi \left( {{N_s} - i + 1} \right) + \ln {\alpha _i}} \right)} - \\
 &\sum\limits_{i = 1}^{{N_u} - 1} {\left( {\psi \left( {{N_s} - i + 1} \right) + \ln {{\bar \alpha }_{k,i}}} \right)}\\
 = & \psi \left( {{N_s} - {N_u} + 1} \right) + \sum\limits_{i = 1}^{{N_u}}{\ln {\alpha _i}} - \sum\limits_{i = 1}^{{N_u} - 1} { \ln {{\bar \alpha }_{k,i}}}\\
 = & \psi \left( {{N_s} - {N_u} + 1} \right) + \ln{\varepsilon _k}.
\end{aligned}
\end{equation}
Therefore, \eqref{ZF UL rate appr1} is obtained.

\subsection{Proof of Corollary \ref{ZF UL appr1 cor2}}\label{Apdx:TheoULZF1cor2}

When $K_k \to \infty $, the non-LoS (NLoS) components can be neglected. For ${\bf{\hat \Sigma }}$, it holds that
\begin{equation}\label{ZF UL rate cor2 proof1}
{\bf{\hat \Sigma }} \approx \frac{1}{N_s}{{\bf{\bar H}}^H}{{\bf{F}}^H} {\bf{F\bar H}}.
\end{equation}
The orthogonality among the equivalent LoS components contributes to
\begin{equation}\label{ZF UL rate cor2 proof2}
{{\bf{\bar H}}^H}{{\bf{F}}^H} {\bf{F\bar H}} = {\rm diag}\left\{{\left\| {{\bf{F\bar h}}_k} \right\|^2}\right\}_{k=1}^{N_u}.
\end{equation}
Hence, ${\bf{\hat \Sigma }}$ becomes diagonal and its eigenvalues correspond to the diagonal elements. Here, we assume ${\left\{ {{\alpha _i}} \right\}_{i = 1,\dots,{N_u}}}$ are unordered eigenvalues, it can be written that
\begin{equation}\label{ZF UL rate cor2 proof3}
\alpha _k = \frac{\left\| {{\bf{F\bar h}}_k} \right\|^2}{N_s}
\end{equation}
for $k=1,\dots, N_u$. Similarly, the eigenvalues of ${{{\bf{\hat \Sigma }}}_k}$ satisfy
\begin{equation}\label{ZF UL rate cor2 proof4}
{\bar \alpha }_{k,i} = \frac{\left\| {{\bf{F\bar h}}_j} \right\|^2}{N_s}
\end{equation}
for $i=1,\dots, N_u-1$ and $j=1,\dots,k-1,k+1,\dots,N_u-1$. As a consequence, it holds that
\begin{equation}\label{ZF UL rate cor2 proof5}
\sum\limits_{i = 1}^{{N_u}} {\ln {\alpha _i}}  - \sum\limits_{i = 1}^{{N_u} - 1} {\ln {{\bar \alpha }_{k,i}}}  = \ln \alpha _k.
\end{equation}
Applying \eqref{ZF UL rate cor2 proof4} and  \eqref{ZF UL rate cor2 proof5} into \eqref{ZF UL rate appr1} and \eqref{ZF UL rate appr1 epsk}, we can obtain \eqref{ZF UL rate cor2}.

\subsection{Proof of Theorem \ref{UL MRC appr}}\label{Apdx:TheoULMRC}
\begin{equation}\label{MRC UL rate appr proof1}
\begin{aligned}
&{R^{\rm{MRC}}_k} \approx\\
& {{{\log }_2}\left(\! {1 \!+\! \frac{{{P_{avg}}\mathbb{E}\! \left\{ {{{\left\| {{{\bf{g}}_{eq,k}}} \right\|}^4}} \right\}}}{{\sum\limits_{j \ne k} {P_{avg}}\mathbb{E}{\left\{ {{{\left| {{\bf{g}}_{eq,k}^H{{\bf{g}}_{eq,j}}} \right|}^2}} \right\}} \! +\! \mathbb{E}\!\left\{ {{{\left\| {{{\bf{g}}_{eq,k}}} \right\|}^2}} \right\}}}}\! \right)} .
\end{aligned}
\end{equation}
According to the definition of ${{\bf{G}}_{eq}}$, we can further write
\begin{equation}\label{MRC UL rate appr proof2}
{{\bf{g}}_{eq,k}} = {\bf{Fg}}_k = {\sqrt {\frac{\beta _k}{{{K_k} + 1}}} {\bf{Fh}}_{w,k} + \sqrt {\frac{{{K_k \beta _k}}}{{{K_k} + 1}}} {\bf{F\bar h}}_k} ,
\end{equation}
where ${{\bf{h}}_{w,k}}$ and ${{\bf{\bar h}}_k}$ are the $k$th column vector of ${{\bf{H}}_w}$ and ${\bf{\bar H}}$, respectively. Recalling the property of the DFT transformation of a complex Gaussian vector, we know that ${\bf{Fh}}_{w,k}$ is still an $N_s$ dimensional complex Gaussian vector. Therefore after derivations, we get the expression of the third expectation item in \eqref{MRC UL rate appr proof1} as
\begin{equation}\label{MRC UL rate appr x3 proof}
\begin{aligned}
\mathbb{E}\left\{ {{{\left\| {{{\bf{g}}_{eq,k}}} \right\|}^2}} \right\} &= \frac{{{\beta _k}}}{{{K_k} + 1}}\left( {{N_s} + {K_k}{{\left\| {{\bf{F\bar h}}_k^{}} \right\|}^2}} \right) \\
 &= \frac{{{\beta _k}}}{{{K_k} + 1}}{\chi _3^{(k)}},
\end{aligned}
\end{equation}
and the first and the second expectation items can be found in \eqref{MRC UL rate appr x1 proof} and \eqref{MRC UL rate appr x2 proof} respectively at the bottom of this page. Applying \eqref{MRC UL rate appr x3 proof}--\eqref{MRC UL rate appr x2 proof} into \eqref{MRC UL rate appr proof1}, we get the desired result.

\subsection{Proof of Theorem \ref{DL ZF1 appr}}\label{Apdx:TheoDLZF1}
Since $\rho$ is a constant during the coherence time of the channel, we can remove the expectation symbol in \eqref{ZF DL rate real} and rewrite it as
\setcounter{equation}{82}
\begin{equation}\label{ZF DL rate appr lt proof1}
{R^{\rm ZF1}} = \sum\limits_{k = 1}^{{N_u}} {{{\log }_2}\left( {1 + P{\rho ^2}} \right)}.
\end{equation}
Then we turn to the calculation for the expression of $\rho ^2$. After a few steps of matrix transformation, it can be derived that
\begin{equation}\label{ZF DL rate appr lt proof2}
\begin{aligned}
\mathbb{E}\left\{ {\left\| {\bf{\bar W}} \right\|_F^2} \right\} &=\mathbb{E}\left\{ {{\rm{trace}}\left( {{\bf{\bar W}}^H{\bf{\bar W}}} \right)} \right\}\\
 &= \mathbb{E}\left\{ {{\rm{trace}}\left( {{{\left( {{\bf{G}}_{eq}^T {{\bf{G}}_{eq}^{*}}} \right)}^{ - 1}}} \right)} \right\}\\
 &= \sum\limits_{k = 1}^{{N_u}} {\beta _k^{ - 1} \mathbb{E}\left\{{\left[ {{{\bf{R}}_{eq}^{ - 1}}} \right]_{k,k}^{*}}\right\}}.
\end{aligned}
\end{equation}
Since ${\bf{\hat \Sigma }}$ is symmetric and Hermitian with positive main diagonal elements, its inverse matrix still holds the same characteristics, that is, ${{\left[ {{{{\bf{\hat \Sigma }}}^{ - 1}}} \right]}_{k,k}^{*}} ={{\left[ {{{{\bf{\hat \Sigma }}}^{ - 1}}} \right]}_{k,k}}$ for $k = 1,\dots,N_u$. Then, we recall the central Wishart approximation of ${{\bf{R}}_{eq}} \approx {\bf{\Lambda}} {{\bf{H}}_1}{\bf{H}}_1^H$. Utilizing \emph{Theorem} of \cite{Gore2002}, we know that ${\gamma _k} = {1 \mathord{\left/ {\vphantom {1 {\left( {{{\bf{R}}_{eq}}} \right)_{kk}^{ - 1}}}} \right. \kern-\nulldelimiterspace} {\left[ {{{\bf{R}}_{eq}^{ - 1}}} \right]_{k,k}}}$ satisfies the Chi-squared distribution
\begin{equation}\label{ZF DL rate appr lt proof3}
f\left( {{\gamma _k}} \right) = \frac{{{{\left[ {{{{\bf{\hat \Sigma }}}^{ - 1}}} \right]}_{k,k}}{e^{ - {\gamma _k}{{\left[ {{{{\bf{\hat \Sigma }}}^{ - 1}}} \right]}_{k,k}}}}}}{{\left( {{N_s} - {N_u}} \right)!}}{\left( {{\gamma _k}{{\left[ {{{{\bf{\hat \Sigma }}}^{ - 1}}} \right]}_{k,k}}} \right)^{{N_s} - {N_u}}},
\end{equation}
and the expectation of ${\gamma _k^{-1}}$ is
\begin{equation}\label{ZF DL rate appr lt proof4}
\mathbb{E}\left\{ {\gamma _k^{-1}} \right\} = \frac{{{{\left[ {{{{\bf{\hat \Sigma }}}^{ - 1}}} \right]}_{k,k}}}}{{{N_s} - {N_u}}}.
\end{equation}
Substituting
\begin{equation}\label{ZF DL rate appr lt proof5}
{\rho ^2} = \frac{1}{\sum\limits_{k = 1}^{{N_u}} {\beta _k^{ - 1}\mathbb{E}\left\{{\gamma _k^{(-1)*}} \right\}}}= \frac{{\left( {{N_s} - {N_u}} \right)}}{{\sum\limits_{k = 1}^{{N_u}} {\beta _k^{ - 1}{{\left[ {{{{\bf{\hat \Sigma }}}^{ - 1}}} \right]}_{k,k}}} }}
\end{equation}
into \eqref{ZF DL rate appr lt proof1}, we get the approximation \eqref{ZF DL rate appr lt}.

\subsection{Proof of Theorem \ref{DL ZF2 appr}}\label{Apdx:TheoDLZF2}
Under this condition, $\rho _1, \dots, \rho_{N_u} $ are not constant any more. Then the achievable rate \eqref{ZF DL rate real} can be approximated by
\begin{equation}\label{ZF DL rate appr st proof1}
{R^{\rm ZF2}} \approx \sum\limits_{k = 1}^{{N_u}} {{{\log }_2}\left( {1 + P \mathbb{E}\left\{ {\rho _k^2} \right\}} \right)} .
\end{equation}
Recalling the definition of $\rho _k$, we know
\begin{equation}\label{ZF DL rate appr st proof2}
{\rho _k^2} = \frac{1}{{N_u}{\left\| {\bf{\bar w}}_k \right\|}^2}.
\end{equation}
Utilizing the matrix transformation property, ${\left\| {\bf{\bar w}}_k \right\|}^2$ can be calculated as
\begin{equation}\label{ZF DL rate appr st proof3}
{\left\| {\bf{\bar w}}_k \right\|}^2 = {\left[ {{\bf{\bar W}}{\bf{\bar W}}^H} \right]}_{k,k}
={\beta_k^{-1}} \left[ {\bf{R}}_{eq} ^{ - 1}\right]_{k,k}^{*} = {\beta_k^{-1}} {\gamma _k^{(-1)*}}.
\end{equation}
According to \eqref{ZF DL rate appr lt proof3}, we can derive that
\begin{equation}\label{ZF DL rate appr st proof4}
\mathbb{E}\left\{ {\rho _k^2} \right\} = \frac{{{\beta _k}}}{{{N_u}}}\mathbb{E}\left\{ {{\gamma _k^{*}}} \right\} = \frac{{{\beta _k}\left( {{N_s} - {N_u} + 1} \right)}}{{{N_u}{{\left[ {{{{\bf{\hat \Sigma }}}^{ - 1}}} \right]}_{k,k}}}}.
\end{equation}
Therefore, applying \eqref{ZF DL rate appr st proof4} into \eqref{ZF DL rate appr st proof1}, \eqref{ZF DL rate appr st} is formulated.

\subsection{Proof of Theorem \ref{DL MRT1 appr}}\label{Apdx:TheoDLMRT1}
According to \emph{Lemma 1} of \cite{Zhang2014}, \eqref{MRT DL rate real} can be approximated by
\begin{equation}\label{MRT DL rate appr lt proof1}
{R^{\rm{MRT1}}} \approx \sum\limits_{k = 1}^{{N_u}} {{{\log }_2}\left( {1 + \frac{{P\rho _k^2 \mathbb{E}\left\{ {{{\left\| {{{\bf{g}}_{eq,k}}} \right\|}^4}} \right\}}}{{\sum\limits_{j \ne k} {P\rho _j^2 \mathbb{E}\left\{ {{{\left| {{{\bf{g}}_{eq,k}^H}{\bf{g}}_{eq,j}} \right|}^2}} \right\}}  + 1}}} \right)},
\end{equation}
where the expressions of the two expectation items can be found in \eqref{MRC UL rate appr x1 proof} and \eqref{MRC UL rate appr x2 proof}, respectively. When it comes to $\rho _k^2$, we first write the definition of long-term normalization of the MRT precoder as
\begin{equation}\label{MRT long-term nomalization}
\rho  = {\rho _1} = {\rho _2} =  \cdots  = {\rho _{{N_u}}} = \frac{1} {\sqrt{\mathbb{E}\left\{ {\left\| {\bf{G}}_{eq} \right\|_F^2} \right\}} }.
\end{equation}
Then we can derive that
\begin{equation}\label{MRT DL rate appr lt proof2}
\begin{aligned}
\mathbb{E}{\left\{ {\left\| {{{\bf{G}}_{eq}}} \right\|_F^2} \right\}}
&= \mathbb{E}{\left\{ {\sum\limits_{i = 1}^{{N_u}} {{{\left\| {{{\bf{g}}_{eq,i}}} \right\|}^2}} } \right\}}\\
&= \sum\limits_{i = 1}^{{N_u}} {\frac{{{\beta _i}}}{{{K_i} + 1}}\left( {{N_s} + {K_i}{{\left\| {\bf{F}}{{\bf{\bar h}}_i} \right\|}^2}} \right)} \\
&= \sum\limits_{i = 1}^{{N_u}} {\frac{{{\beta _i}}}{{{K_i} + 1}}\chi _3^{(i)}} .
\end{aligned}
\end{equation}
Applying \eqref{MRC UL rate appr x1 proof}, \eqref{MRC UL rate appr x2 proof} and \eqref{MRT DL rate appr lt proof2} into \eqref{MRT DL rate appr lt proof1}, it returns \eqref{MRT DL rate appr lt}.

\subsection{Proof of Theorem \ref{DL MRT2 appr}}\label{Apdx:TheoDLMRT2}
According to the definition of the short-term normalization of the MRT precoders, i.e.,
\begin{equation}\label{MRT short-term nomalization}
{\rho _k} = \frac{1}{{\sqrt {{N_u}} \left\| {\bf{g}}_{eq,k} \right\|}},
\end{equation}
the achievable rate satisfies
\begin{equation}\label{MRT DL rate appr st proof1}
{R^{\rm{MRT2}}} = \sum\limits_{k = 1}^{{N_u}}\! \mathbb{E}{\left\{ {{{\log }_2}\left( {1 \!+\! \frac{{{\frac{P}{N_u}} {{\left\| {{{\bf{g}}_{eq,k}}} \right\|}^2}}}{{\sum\limits_{j \ne k} {\frac{{{P{\left| {{{\bf{g}}_{eq,k}^H}{\bf{g}}_{eq,j}} \right|}^2}}}{{{N_u{\left\| {{{\bf{g}}_{eq,j}}} \right\|}^2}}}} \!+\! {\rm{1}}}}} \right)} \right\}} .
\end{equation}
Recalling \emph{Lemma 1} of \cite{Zhang2014}, \eqref{MRT DL rate appr st proof1} can be approximated by
\begin{equation}\label{MRT DL rate appr st proof2}
{R^{\rm{MRT2}}}\approx \sum\limits_{k = 1}^{{N_u}} {{{\log }_2}\left( {1 + \frac{{{\frac{P}{N_u}} \mathbb{E}\left\{ {{{\left\| {{{\bf{g}}_{eq,k}}} \right\|}^2}} \right\}}}{{\sum\limits_{j \ne k} {\frac{P {\mathbb{E}\left\{ {{{\left| {{{\bf{g}}_{eq,k}^H}{\bf{g}}_{eq,j}} \right|}^2}} \right\}}}{N_u{\mathbb{E}\left\{ {{{\left\| {{{\bf{g}}_{eq,j}}} \right\|}^2}} \right\}}}}  + {\rm{1}}}}} \right)}.
\end{equation}
Utilizing the results in \eqref{MRC UL rate appr x3 proof} and \eqref{MRC UL rate appr x2 proof}, we can obtain \eqref{MRT DL rate appr st}.

\end{document}